\numberwithin{equation}{section}
\newtheorem{theorem}{Theorem}[section]
\newtheorem{lemma}[theorem]{Lemma}
\newtheorem{proposition}[theorem]{Proposition}
\newtheorem{corollary}[theorem]{Corollary}
\def\remark #1{\noindent{\bf Remark:} #1\\}
\def\example #1{\noindent{\bf Example:} #1\\}
\long\def\claim #1 #2{\bigskip\noindent{\bf Claim {#1}} {\it #2}\bigskip}
\def\xclaim #1 #2{\noindent{\bf Claim {#1}} {\it #2}\bigskip}
\newenvironment{proof}{\noindent{\bf Proof:}}{\hfill $\Box $\\}
\renewcommand{\thetheorem}{\arabic{section}.\arabic{theorem}}
\def\sqr#1#2{{\vcenter{\vbox{\hrule height .#2pt
              \hbox{\vrule width .#2pt height#1pt \kern#1pt
              \vrule width .#2pt} \hrule height .#2pt}}}}
\def\ncas #1 {\noindent {\bf Case #1.}\ }
\def\bipart #1 #2{\bigskip \noindent {\bf #1} {\it #2}}
\def\xbipart #1 #2{\noindent {\bf #1} {\it #2}}
\def\iipart #1 #2{\bigskip \noindent {\it #1} {\it #2}}
\def\xiipart #1 #2{\noindent {\it #1} {\it #2}}
\def\brpart #1 #2{\bigskip \noindent {\bf #1} {#2}}
\def\xbrpart #1 #2{\noindent {\bf #1} {#2}}
\def\irpart #1 #2{\bigskip \noindent {\it #1} {#2}}
\def\xirpart #1 #2{\noindent {\it #1} {#2}}
\def\o {\overline}
\def\case #1{\bigskip\noindent{{\bf Case} {\em #1}:}}
\def\subcase #1{\bigskip\noindent{{\bf Subcase} {\em #1}:}}
\def\numcase #1 #2{\bigskip\noindent{{\bf Case #1} {\em #2}:}}
\def\nclaim #1 {\noindent{\bf Claim #1: }}
\def\obs #1 {\bigskip\noindent{\bf Observation #1: }} 
\def\mathy #1{\ifmmode {#1}\else{$#1$}\fi}
\def\o{\overline}
\begin{document}

\ifcase 0  

\title{The Weighted Matching Approach \\to
Maximum Cardinality Matching} 
\author{%
Harold N.~Gabow%
\thanks{Department of Computer Science, University of Colorado at Boulder,
Boulder, Colorado 80309-0430, USA. 
E-mail: {\tt hal@cs.colorado.edu} 
}
}

\date{March 10, 2017}

\maketitle
\def\today{\ifcase\month\or
January\or February\or March\or April\or May\or June\or
July\or August\or September\or October\or November\or December\fi
\ \number\day, \number\year}
\def\date#1.#2.{\ifcase#1\or
January\or February\or March\or April\or May\or June\or
July\or August\or September\or October\or November\or December\fi
\ #2, \number\year}
\def\ydate#1.#2.#3.{\ifcase#1\or
January\or February\or March\or April\or May\or June\or
July\or August\or September\or October\or November\or December\fi
\ #2, 199#3}
\def\nydate#1.#2.{\ifcase#1\or
January\or February\or March\or April\or May\or June\or
July\or August\or September\or October\or November\or December\fi
\ #2}
\def\doublespace{\multiply\baselineskip by3\divide\baselineskip by2%
                 \def\doublespace{}}
\def\bigdoublespace{\multiply\baselineskip by2%
                 \def\bigdoublespace{}}
\def\imp{\ifmmode {\ \Longrightarrow \ }\else{$\ \Longrightarrow \ $}\fi}
\def\rimp{\ifmmode {\ \Longleftarrow \ }\else{$\ \Longleftarrow \ $}\fi}
\def\ximp{\ifmmode {\Longrightarrow\ }\else{$\Longrightarrow\ $}\fi}
\def\xrimp{\ifmmode {\Longleftarrow\ }\else{$\Longleftarrow\ $}\fi}
\def\iff{\ifmmode {\ \Longleftrightarrow \ }\else{$\ \Longleftrightarrow \ $}\fi}
\def\xiff{\ifmmode {\Longleftrightarrow\ }\else{$\Longleftrightarrow\ $}\fi}
\def\tru{\ {\bf true}\ }
\def\fal{\ {\bf false}\ }
\def\wrt{\ {\it wrt}\ }
\def\endskip{\medskip}
\def\qed{$\Box$}
\def\qedn{\ \vrule width4pt depth-1pt height7pt }
\def\rqed{\hfill\hbox to 24 pt{\vrule width4pt depth-1pt
height7pt\hfil}\bigskip}
\def\rqedn{\hfill\hbox to 24 pt{\vrule width4pt depth-1pt height7pt\hfil}}
\def\log{\ifmmode \,{ \rm log}\,\else{\it log }\fi}
\def\con {\subseteq}
\def\pcon{\subset}
\def\firstnumstp#1 {\bigskip \noindent{\it Step} #1.\newquad}
\def\numstp#1 {\endskip\noindent{\it Step} #1.\newquad}
\def\newquad{\hskip1ex}
\def\stp#1.{\endskip
\noindent{\it #1 Step.}\newquad}
\def\firststp#1.{\bigskip
\penalty-1000
\noindent{\it #1 Step.}\newquad}
\def\cas#1 {\smallskip\noindent{\bf Case} #1.\ } 
%
%
%
\long\def\sec#1{\bigskip
\penalty-2000%
\noindent{\twelvebf #1}\par\ignorespaces\noindent\ignorespaces}
\def\aorbsec#1{\noindent{\twelvebf #1}}
\def\nsec#1{\penalty-2000%
\noindent{\bf #1\hfill\break}
\hbox to \parindent{\hfill}\ignorespaces}
\long\def\res #1. #2{\bigskip
\penalty-1000
\noindent {\bf #1.}\newquad%
#2 \bigskip}
\long\def\nres #1. #2{\bigskip
\noindent {\bf #1.}\newquad%
#2}
\def\pf{\noindent {\bf Proof.}\newquad}
\def\cont{\ifmmode\star\else$\star$\fi}
\def\+{\tabalign} 
\def\nskp{\def\bigskip{}}
\def\i{($i$) } \def\xi{($i$)}
\def\ii{($ii$) } \def\xii{($ii$)}
\def\iii{($iii$) } \def\xiii{($iii$)}
\def\iv{($iv$) } \def\xiv{($iv$)}
\def\pa{({\it a}) } \def\xpa{({\it a})} 
\def\pb{({\it b}) } \def\xpb{({\it b})}
\def\pc{({\it c}) } \def\xpc{({\it c})}
\def\hi{\hskip20pt\i} \def\hii{\hskip20pt\ii} \def\hiii{\hskip20pt\iii}
\def\ha{\hskip20pt\pa} \def\hb{\hskip20pt\pb} \def\hc{\hskip20pt\pc}
\def\tran{{\buildrel*\over\to}}
\def\n{\rlap{$\>/$}}
\def\({{\rm(}} \def\){{\rm)}}
\def\c#1{\lceil {#1} \rceil}
\def\f#1{\lfloor {#1} \rfloor}
\long\def\boxit#1{\vtop{\hrule
\hbox{\vrule\quad\vtop{\vskip5pt\hbox{#1}\vskip5pt}\quad\vrule}
\hrule}} 
\def\iboxit#1{\vtop{\hrule
\hbox{\vrule\quad\vtop{\vskip5pt\hbox{{\it #1}}\vskip5pt}\quad\vrule}
\hrule}} 
\def\x{\iffalse}
\def\b{\bigskip}
\def\set #1#2{\{ #1:#2 \}}
\def\pset #1#2{( #1:#2 )}
\def\h{\hskip20pt}
\def\hi{\advance\parindent by 20pt}

\def\o{\overline} 
\def\u{\underline}
\def\opn{\hangindent=40pt\hangafter=1}
\def\h{{\hskip 20pt}}
\def\v{\vfill}
\def\hi{\advance \parindent by 20pt}
\def\d{\cdot}
\def \il #1{\log^{(#1)} }
\def\al.{{\it add\_leaf}}
\def\alm{{\it add\_leaf}$\,$}
\def\O{o\hbox{-}smallest}
\def\os.{\ifmmode{ \o{\cal S} }\else{$\o {\cal S}$}\fi}
\def\oP.{\ifmmode{ \o{\cal P} }\else{$\o {\cal P}$}\fi}
\def\ot.{\mathy{ \o{\cal T} }}
\def\oG{\o G}
\def\oB{\o B}
\def\oE.{\mathy{\overline E}}
\def\p(#1,#2){\ifmmode p(#1,#2) \else{$p(#1,#2)$}\fi}
\def\op(#1,#2){\ifmmode \o{p}(#1,#2) \else{$\o{p}(#1,#2)$}\fi}
\def\lb{\ifmmode \,{ \rm log}_\beta \else{\it log XX }\fi}
\def\wh{\widehat}
\def\wx.{\ifmmode \wh x \else$\wh x$\fi}
\def\wy.{\ifmmode \wh y \else$\wh y$\fi}
\def\wz.{\ifmmode \wh z \else$\wh z$\fi}
\def\wv.{\ifmmode \wh v \else$\wh v$\fi}
\def\Px.{\ifmmode \wh x \else$\wh x$\fi}
\def\Py.{\ifmmode \wh y \else$\wh y$\fi}
\def\Pz.{\ifmmode \wh z \else$\wh z$\fi}
\def\Pv.{\ifmmode \wh v \else$\wh v$\fi}
\def\Pr.{\ifmmode \wh r \else$\wh r$\fi}
\def\Pr.{\ifmmode \wh r \else$\wh r$\fi}
\def\A.{\mathy{{\cal A}}}
\def\B.{\mathy{{\cal B}}}
\def\E.{\ifmmode {{\cal E}}\else{{$\cal E$}}\fi}
\def\F.{\mathy{\cal F}}
\def\H.{\mathy{\cal H}}
\def\M.{\mathy{\cal M}}
\def\P.{\mathy{\cal P}}
\def\S.{\ifmmode {{\cal S}}\else{{$\cal S$}}\fi}
\def\T.{\mathy{\cal T}}
\def\mathy #1{\ifmmode {#1}\else{$#1$}\fi}
\def\goin{\hspace{17pt}}

\begin{abstract}
Several papers have achieved time 
$O(\sqrt n m)$ for
cardinality matching,
starting from first principles. 
This results in a long  derivation.
We simplify the
task by employing well-known concepts for maximum
weight matching. We use  Edmonds' algorithm 
to derive the
structure of shortest augmenting paths.
We extend this to a complete algorithm for maximum cardinality matching
in time $O(\sqrt n m)$.
\end{abstract}
\fi 

\def\switch{0}
\ifcase \switch 
\iftrue
\def\NumTextLabelWidth{30pt}
\def\NumTextWidth{419.75pt}
\long\def\NumText (#1) #2 {\bigskip
\noindent
\hbox to \NumTextLabelWidth{(#1)\hfill}\parbox[c]{\NumTextWidth}{#2}
\bigskip}

\def\myif #1{{\bf if} {\em {#1}} {\bf then}}
\def\myelif #1{{\bf else if} {\em {#1}} {\bf then}}
\def\myifel #1 #2 #3 {{\bf if} {\em {#1}} {\bf then} {\em {#2}} {\bf else} {\em {#3}}}

\def\myfor #1{{\bf for} {\em {#1}} {\bf do}}
\def\mycom #1{{\tt /* #1 */}}

\def\sm.{\mathy{\cal S^-}}

\section{Introduction}
The most efficient known algorithms for cardinality matching
on nondense graphs 
achieve time $O(\sqrt n m)$.
The best known of these algorithms are not readily accessible:
Micali and Vazirani were first to present such an algorithm
\cite{MV} but proving it correct has met difficulties
\cite{V1,V2}. 
Gabow and Tarjan present a complete development but only at the end
of a long paper with a different goal,
a scaling algorithm for weighted matching \cite{GT89}.
Similarly
Goldberg and Karzanov develop a new framework for flow and matching problems
(``skew symmetric matchings'') and again the cardinality matching algorithm
requires mastery of this framework.
Each of these papers tackles a difficult subject
from first principles.

This paper presents an accessible
matching algorithm with time bound  $O(\sqrt n m)$. 
We 
simplify the task by taking advantage of the well-established
theory for maximum weight matching.
We include a  review of
Edmonds' weighted matching algorithm \cite{E} but
still it  is helpful
to be familiar with the algorithm.
Complete treatments are in various texts
e.g., \cite{CCPS, L, PS,S}.

At first glance maximum weight matching seems to offer little insight
to the problem. 
However the dual variables for weighted matching
reveal structure that can either be used directly or must be rederived
in a presentation from first principles.
More importantly we show that a judicious choice of edge weights maps
a large piece of the puzzle into simple properties of weighted matching.
 
The use of weighted matching for cardinality matching
was introduced by Gabow and Tarjan \cite{GT89}. 
Their cardinality matching algorithm
uses a relaxation of the linear program duals that 
is helpful for scaling. 
Our algorithm is based on the LP duals. In that respect it
differs from \cite{GT89} at the structural  level. 

However to make our presentation complete, at the data structure level
we  use the depth-first 
search procedure of \cite{GT89}, especially because of its simplicity.
We also give a more detailed analysis of the procedure's correctness 
than \cite{GT89},
in an appendix.
An alternative for this part of the algorithm is the
double depth-first search algorithm of Micali-Vazirani \cite{MV}.
We use Edmonds' algorithm to deduce the key structure
for this algorithm, in another appendix.

The paper presents
our algorithm in a top-down fashion.
Section \ref{ApproachSec} gives the overall approach, using two Phases. 
Section \ref{Phase1Sec}  uses 
Edmonds' algorithm to implement Phase 1.
Section \ref{Phase2Sec} restates the Gabow-Tarjan algorithm for Phase 2
\cite{GT89}, with a complete correctness proof in Appendix
\ref{FAPApp}. Section \ref{DataStructureSec} 
gives details of the data structure
that achieve our desired time bound. Appendix \ref{DSHApp}
uses Edmonds' algorithm to prove  existence of
the starting edge 
for the double depth-first search of the Micali-Vazirani algorithm
\cite{MV}.

\paragraph*{Terminology}
The given graph is always denoted as $G=(V,E)$. 
An edge in a minor of $G$ is denoted as its preimage, i.e., 
$xy$ for $x,y\in V$.
For a set of vertices $S\subseteq V$, 
$\gamma(S)$ denotes the set of edges with both ends
in $S$.

A {\it matching} $M$ on a graph is a set of vertex-disjoint edges.
$M$ has {\em maximum cardinality} 
if it has the greatest possible number of edges.
Let each edge $e$ have a real-valued
{\it weight} $w(e)$. 
For a set of edges $S$ define $w(S)=\sum_{e\in S} w(e)$.
$M$ has {\em maximum weight} if $w(M)$ is maximum.

For an edge $xy\in M$ we say $x$ and $y$ are {\em mates}.
A vertex is {\it free} if it is not on any matched edge.
An {\it alternating  path} is a vertex-simple path 
whose edges are alternately matched and unmatched.
(Paths of 0 or 1 edges are considered alternating.)
An {\it augmenting path  P} is an alternating path joining two distinct free
vertices.
To {\it augment the matching along $P$} means to enlarge the matching $M$ 
to $ M \oplus P$ (the symmetric difference of $M$ and $P$).
This gives a matching with one more edge.

\section{The approach}
\label{ApproachSec}
A {\em shortest augmenting path}, or {\em sap}, is an augmenting path
of shortest length possible.
Hopcroft and Karp \cite{HK} and independently Karzanov \cite{K}
presented an efficient approach to finding a maximum cardinality matching:
Repeatedly find a maximal collection of
vertex-disjoint {\em saps} and augment the matching with them.
This algorithm repeats only $O(\sqrt n)$ times \cite{HK,K}. 
Fig.\ref{HiAlg} gives our implementation of this approach.

\begin{algorithm}
\DontPrintSemicolon

$M \gets\emptyset$

{\bf loop}

\Indp

\tcc{Phase 1}

\myfor {every edge $e$} 
$w(e)\gets $ \myifel {$e\in M$} {$2$} {$0$}
 
execute a search of Edmonds' weighted matching algorithm

\myif {no augmenting path is found}  {\bf halt}
\mycom {$M$ has maximum cardinality}

form the graph $H$ of permissible edges

\bigskip

\tcc{Phase 2}
 
$\P. \gets $ a maximal set of vertex-disjoint augmenting paths in $H$

augment $M$ by the paths of \P.

\caption{The high-level cardinality matching algorithm.}
\label{HiAlg}
\end{algorithm}

We will show that Edmonds' algorithm halts having found an {\em sap}
(unless there is no augmenting path).  Furthermore it provides the
information needed to construct the graph $H$, which has the property
that the {\em saps} of $G$ correspond 1-to-1 to the augmenting paths
in $H$.  We will complete the algorithm using one of several known
algorithms to find the maximal set of augmenting paths \P..
Every iteration will use $O(m)$ time so the entire algorithm uses
$O(\sqrt n m)$ time.

\section{Phase 1 via  Edmonds' algorithm}
\label{Phase1Sec} 
Section \ref{EdAlgSec} states the simplified version of Edmonds' algorithm that we use.
It also
 gives a brief review of blossoms.
Section \ref{EdPropSec}
characterizes the augmenting paths that are found.
Section \ref{Phase1DetailSec} gives the algorithm for
Phase 1 and proves the graph $H$ has the key property.

\subsection{Edmonds' weighted matching algorithm}
\label{EdAlgSec} 
Fig.\ref{EdAlg} gives pseudocode for a search of Edmonds' weighted
matching algorithm \cite{E}. The figure
omits code 
that is never executed in the special case of Edmonds'
algorithm that we use.
Omissions are indicated by
comments. (Basically they result from the fact that 
there are no blossoms at the start of a search.)
We will explain the code of the figure and then give a precise statement
of the assumptions that simplify the code (see (A) below).
In the figure $M$ denotes the matching, $B_x$ is the blossom containing vertex $x$, see below.

The algorithm is illustrated in 
Fig.\ref{Phase1StepsFig}. 
Free vertices are square and matched edges are heavy.
The dashed edges of Fig.\ref{Phase1StepsFig}(f) form an augmenting path.
The numbers in Fig.\ref{Phase1StepsFig} are defined below.

\begin{algorithm}[t]
\DontPrintSemicolon

make every free vertex the (outer) root of an \os.-tree\;

\tcc{general algorithm also makes free blossoms into roots}

{\bf loop}

\Indp

\myif {$\exists$ tight edge $e=xy$ with $x$ outer, $B_x\ne B_y$}

\goin\myif {$y\notin V(\S.)$} {\tt /* grow step */}

{\goin\goin add $xy, yy'$ to \S., where $yy'\in M$\;

\goin\goin \tcc{general algorithm adds blossom $B_y$ \& its matched blossom}
}

\goin\myelif{$y$ is outer}

\goin\goin\myif{$x$ and $y$ are  in the same search tree} 
\mycom {blossom step}

\goin\goin\goin merge all blossoms in the fundamental cycle of $e$ in \os.\;

\goin\goin{\bf else} \mycom {$xy$ plus the \os.-paths to $x$ and  $y$ form
an augmenting path}

\goin\goin\goin {\bf return} \mycom {general algorithm proceeds to
augment the matching}


\mycom {general algorithm may expand an inner blossom}

\lElse {\tcc{dual adjustment step}
{\Indp{

$\delta=\min
 \set{y(e)-w(e)}{e=uv \mbox{ with $u$ outer, } v\notin V(\S.)} 
\ \cup $

\goin\goin\hskip4pt $\set{(y(e)-w(e))/2}{e=uv \mbox{ with $u,v$ in distinct
outer blossoms}}$

\tcc{general algorithm includes set for  duals of inner blossoms}

\myif{$\delta=\infty$} 
{{\bf return}  \tcc{$M$ has  maximum cardinality}}
\lFor{every vertex $v\in V(\S.)$}
\goin\goin\lIf{$v$ is inner}{$y(v)\gets y(v)+\delta$
{\bf else} $y(v)\gets y(v)-\delta$}

{\tt
/* general algorithm changes duals of blossoms.\ in particular\\

\goin $z(B)\gets z(B) +2\delta$ 
for every maximal outer blossom $B$.\ */

}

}}}


\caption{Simplified search of Edmonds' algorithm.}
\label{EdAlg}
\end{algorithm}

\begin{figure}[th]
\centering
\input{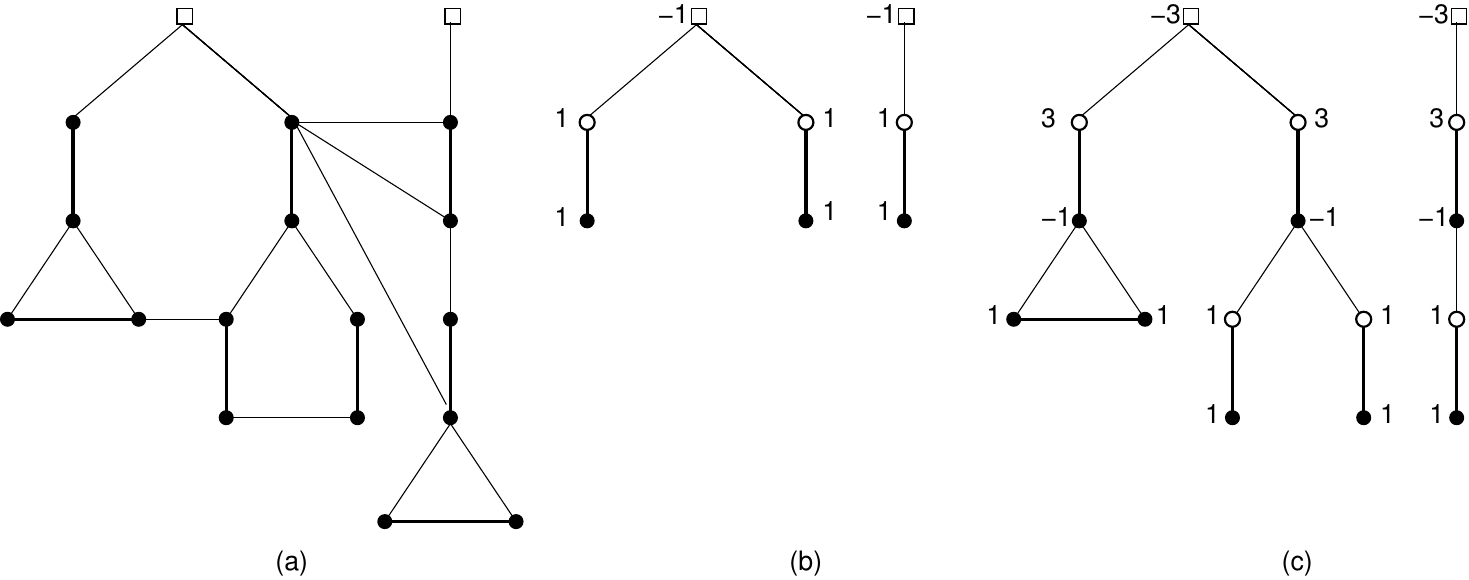_t}

\vskip 20pt

\input{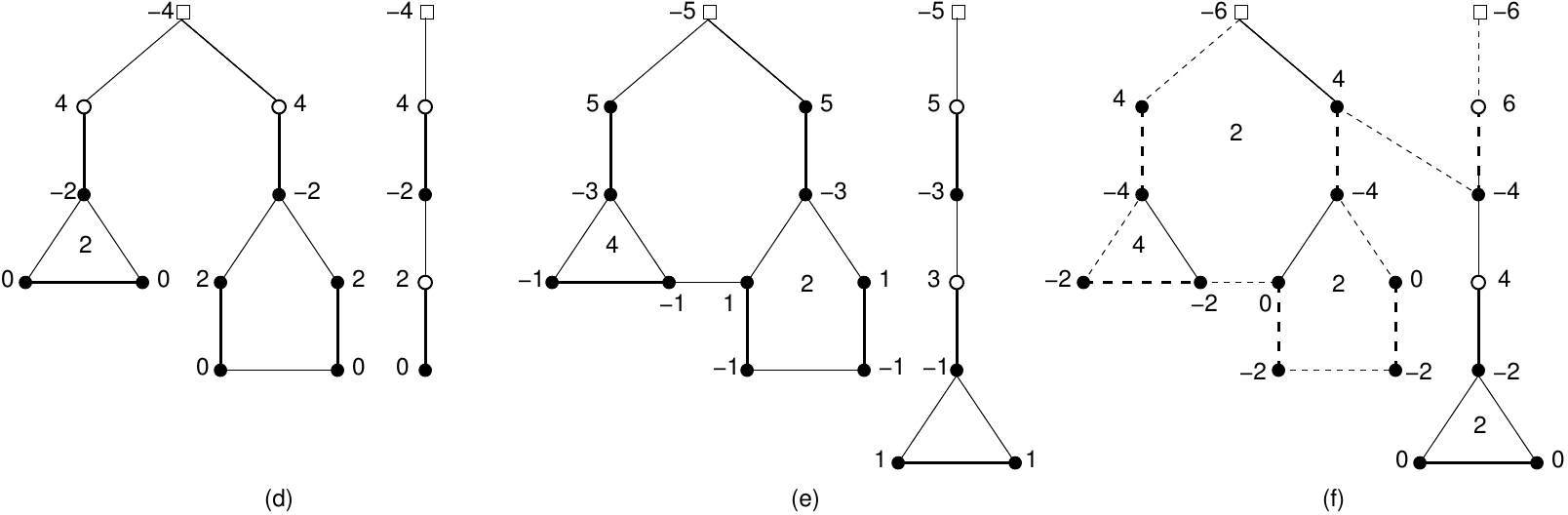_t}
 \caption{(a) Example graph for finding an {\em sap}. 
Unmatched edges weigh  0, matched edges weigh 2.
(b)--(f) Search of Edmonds' algorithm.
(b)--(e) show \S., with  inner vertices drawn hollow,
and the $y$ and $z$ dual variables. (f) shows the  augmenting path as dashed edges.}
 \label{Phase1StepsFig}
 \end{figure}

The algorithm builds a ``search structure''
$\cal S$, a subgraph of $G$. 
(Fig.\ref{Phase1StepsFig}(b)--(e) show the various \S. structures.)
It also
maintains \os.,
the subgraph \S. with various sets (called ``blossoms'', defined below)
contracted. \os. is a  forest. Its roots are
the free vertices and contracted sets that each contain a free vertex.
\os. is an ``alternating forest'' \cite{E}: Any path from a node to
the root of its tree is alternating.
A node of \os.
is {\em inner}
(respectively {\em outer}) if its path starts with an unmatched (matched) edge.
An inner node is always a vertex of $G$. An outer node can be a contracted
set. Any vertex of $V$ in such a set is also called {\em outer}.

The algorithm adds new vertices and edges to \S. in a
{\em grow step}, which adds an edge from an outer vertex to a new inner
vertex $y$. It also adds the matched edge from $y$ to its mate,
a new outer vertex. 
(Fig.\ref{Phase1StepsFig}(b) shows the result of
three grow steps.) 

Suppose the algorithm discovers an edge
$e$ joining two outer vertices. If $e$ joins two distinct trees of \os.
it completes an augmenting path. An {\em augment step}
enlarges the matching. 
(Fig.\ref{Phase1StepsFig}(f) shows the algorithm's 
augmenting path as dashed edges.)

If $e$ joins nodes in the same tree of \os.
a {\it blossom step} is done:
$e$ is  added to \S., and
the fundamental cycle $C$ of $e$ in forest \os. is contracted.
The vertices of $V$ belonging to contracted nodes on $C$
form a {\em blossom}.
(A blossom step is  executed in 
each of Fig.\ref{Phase1StepsFig}(c), (d), and (e).)

To describe the last step, recall
that the algorithm is based on Edmonds' formulation of weighted matching as
a linear program \cite{E}. Each vertex $v\in V$ has a dual value
$y(v)$. Each blossom $B\con V$ 
has a nonnegative dual value
$z(B)$. The duals
{\em dominate} an  edge $uv\in E$ if
\[
y(u) +y(v)  + \sum_{v,w\in B} z(B) \ge w(e).
\]
\noindent
$uv$ is 
{\em tight} if equality holds in this constraint.
The algorithm maintains duals so that when 
$z$ values are included, every edge is always dominated.
Furthermore every matched edge is tight,
and an edge triggering a grow, blossom or augment step must be tight.
We use the common convention that for
$e=uv\in E$,  $y(e)$ denotes
$y(u)+y(v)$. Note that the test for tightness in Fig.\ref{EdAlg}
has $B_x\ne B_y$ and so  does not require $z$ values.

When no grow, blossom, or augment step can be done,
the algorithm makes progress by executing a
{\it dual adjustment step}.
It modifies  dual variables
so one or more of the other steps can be performed.
If this cannot be done the current matching has maximum cardinality.
The modification maintains the invariant that every edge of \S. is tight.

In Fig.\ref{Phase1StepsFig} each matched (respectively unmatched) edge
weighs 2 (0).
Each vertex is labelled with its $y$-value.
Each blossom is labelled with its $z$-value, 
even though these values are not
recorded in our algorithm. 
The label is in the interior of the blossom, and only included when $z$ is nonzero. For example the dual adjustment 
at the end of Fig.\ref{Phase1StepsFig}(c)
increases the 
$z$-value of the triangular blossom from 0 to 2. This dual increases
to 4 after part (d), and does not change in the dual adjustment
after part (e).

We conclude this section by stating the assumptions
that allow our simplifications to the general weighted matching algorithm:

\NumText (A) {%
$\bullet$ The algorithm begins  with a matching $M$ and  no blossoms.

$\bullet$ The algorithm begins 
with dual variables $y(v), v\in V$ that 
dominate  every edge and are tight on every matched edge (there are no $z$ variables).

$\bullet$
The algorithm does not need to 
track $z$ values of blossoms (since there will be no subsequent search).

}

\paragraph*{Blossoms}
It is convenient to consider every vertex as a blossom.
But these singleton vertices do not have a $z$ dual.
The notation $B_x$ for $x\in V$ denotes the maximal blossom containing $x$.

Any blossom $B$ has a {\em base vertex}: The base vertex of a singleton
$B=\{v\}$ is $v$. The base vertex of a blossom constructed
in the blossom step of Fig.\ref{EdAlg}
is defined as follows:
The fundamental cycle of $e=xy$ in \os.
contains a unique node
 of minimum depth -- the nearest common ancestor
$a$ of $x$ and $y$. The base vertex of the new blossom is the
base vertex of $a$. (In Fig.\ref{Phase1StepsFig}(b)--(f) the base of any
blossom is  the vertex 
 closest to the root. For instance in part (e)
the left subgraph is a blossom whose base vertex is the root.)

Note that
the base vertex of an arbitrary blossom
$B$ is the unique vertex of $B$ that is not
matched to another vertex of $B$.  It may be
free or matched to a vertex
not in $B$.

Any
 blossom $B$
has  a natural representation as an ordered tree $R_B$.
The root is a node corresponding to $B$.
The leaves
correspond to the vertices of $V$ in $B$.
Any interior node  corresponds to a blossom $B'$ formed in the blossom step
of Fig.\ref{EdAlg}. Let 
the fundamental cycle $C$ of that step consist of blossoms
$C_i$, $i=0,\ldots, k$. Here $C_0$ contains the base vertex of $B'$,
and the indexing corresponds to the order of the blossoms in 
 a traversal of $C$ (in either direction). The children of $B'$ 
correspond to $C_i$, $i=0,\ldots, k$, in that order.
In addition $R_B$ records the edge $c_ic_{i+1}$ that joins
each child $C_i$ to the next child  $C_{i+1}$ (taking $k+1$ to be 0).
These edges are alternately unmatched and matched. 
Each matched edge has one of its ends a vertex of $V$ (i.e., $C_i=c_i$).
Each end is  the base vertex of its blossom.
The base vertex of
$B$ 
 is also recorded in $R_B$, since it is not determined
by a matched edge of $C$.

Note that the edges $c_ic_{i+1}$ of $R_B$ all belong to
$E(\S.)$ and are tight. Also
$R_B$ has $O(|B|)$ nodes.
This follows since each $B'$ has $k+1$ children and $k/2$ matched edges.
So the $k+1$ children can be associated with $2\cdot k/2=k$ vertices
of $V$ in $B$ ($k\ge 2$).

Any vertex $v\in B$  has an even-length alternating path
$P(v,b)\con E(\S.)$ that starts with the matched edge at $v$ and ends at the base
$b$.
(The exception is $P(b,b)$, which has no edges.) $P(v,b)$ is
specified recursively using $R_B$, as follows. 
Let $C_i, i=0,\ldots,k$ be the children of the root $B$. Let
$v$ belong to child $C_j$.
The path $P(v,b)$ passes through $C_h$ for $h=j,j+1,\ldots,k,0$
if $c_jc_{j+1}$ is matched,
else $h=j,j-1,\ldots, 1,0$ if
$c_{j-1}c_{j}$ is matched.
Applying this description recursively to the children $C_h$
gives the entire path $P(v,b)$. 
(Note that 
$P(v,b)$ traverses
some recursive subpaths $P(v',b')$ in reverse order,
from $b'$ to $v'$. But  the algorithm does not require this order.)

As an example, the augmenting path of
Fig.\ref{Phase1StepsFig}(f)
 traverses the blossom of Fig.\ref{Phase1StepsFig}(e)
on 
the path
$P(v,f)$ of length 10.

The paths $P(v,b)$ are used (in the complete version of Edmonds'
algorithm, as well as our algorithm in Phase 2) to augment the matching.
Note that the order of edges in $P(v,b)$ is irrelevant for this operation,
since we are simply changing matched edges to unmatched and vice versa.
Also note that every edge of $P(v,b)$ is tight, since $P(v,b) \con E(\S.)$.
So augmenting keeps every matched edge  tight.

\subsection{Properties of Edmonds' algorithm}
\label{EdPropSec}
Implicit in Edmonds' algorithm is that it finds a maximum weight
augmenting path. This section proves this and characterizes 
the structure of all maximum weight augmenting paths.
We make this assumption on the initialization:

\NumText (A$'$) {
The initial $y$ function is constant, with no unmatched edge tight.}

\noindent
A constant $y$ is the usual initialization of Edmonds' algorithm.
Having no unmatched tight means there is at least one dual adjustment --
this assumption simplifies the notation.

As usual
define the weight of a path $P$ to be 
\[w(P)=w(P- M)-w(P\cap M).\]

\begin{lemma}
\label{EdmondsAPLemma}
At any point in Edmonds' algorithm, 
any augmenting path $P$ and any free vertex $f$ have
$w(P)\le 2y(f)$.
\end{lemma}

\begin{proof}
An edge $rs$ is dominated if it is unmatched, i.e.,
\begin{equation}
\label{WLeHEqn}
w(rs) \le y(r)+y(s)+\sum_{r,s\in V(B)} z(B),
\end{equation}
and equality holds if $rs$ is matched. 
So replacing every term $w(rs)$ in the definition of
 $w(P)$ by the right-hand side
of \eqref{WLeHEqn} gives an upper bound on $w(P)$.
Any interior vertex  of $P$, say $r$, 
has $y(r)$ appearing in both
$w(P-M)$ and  $w(P\cap M)$.
Hence the $y$ terms in the upper bound sum to $2y(f)$.
So it suffices to show the $z$ terms have nonpositive sum.

Let $B$ be any blossom  and let $b$ be its base.
Every vertex of $B$ except $b$ has its mate contained in $B$.
$P$ is not contained in $\gamma(B)$ since $P$ contains two free vertices.
Consider a maximal length subpath $S$ of $P\cap \gamma(B)$.
$S$ is alternating.
There are two cases:

\case {$S$ has $b$ at one end}
Following edges starting from $b$ shows
$S$ ends at a matched edge of $\gamma(B)$.
So $S$ has even length, and its edges make no net contribution
of $z(B)$ terms to the upper bound. 

\case {Neither end of $S$ is $b$}
The first and last edges of $S$ are matched.
Thus $S$
makes a net contribution of 
$-z(B)\le 0$ to the upper bound. 

We conclude the total contribution of $z$ terms to the upper bound
is $\le 0$, as desired.
\end{proof}

Call a blossom $B$ {\em positive} if $z(B)>0$.
$B$ is positive iff it was formed before the last dual adjustment.

The following corollary refers to the end of Edmonds' search --
tightness refers to the final duals, and blossoms are as defined 
over the entire algorithm.

\begin{corollary}
\label{EdmondsAPCor}
An augmenting path $P$ has  maximum weight iff all its edges are tight
and for every positive blossom $B$, $P\cap \gamma(B)$  is an even-length 
alternating path.
\end{corollary}

\begin{proof}
The if direction follows
 from the proof of the lemma. Specifically
 the proof  implies that $P$ achieves the upper bound of the lemma,
i.e., $w(P)=2y(f)$, if
every edge of $P$ is tight and $P$ traverses positive blossoms as specified
in the lemma. 

In particular the augmenting path $A$ found in
 the search algorithm satisfies these sufficient conditions.
(In fact $A$ traverses {\em every}  blossom
as in the lemma.) Thus $A$ is a maximum weight augmenting path.

This implies  an augmenting path has maximum weight iff its weight is $2y(f)$. 
So the proof of the lemma implies the conditions of the corollary must
hold for the final dual variables and the positive blossoms of the algorithm.
\end{proof}

\subsection{Phase 1}
\label{Phase1DetailSec}
As in Fig.\ref{HiAlg} an edge weighs 2 if it is matched, else 0.
We initialize Edmonds' algorithm by setting every $y$ value to 1.
This makes every matched edge tight, 
every unmatched edge dominated but not tight, 
and $y$ constant.
Furthermore
in every iteration of Fig.\ref{HiAlg} Edmonds' algorithm starts afresh with no
blossoms. So the assumptions of (A) and (A$'$) 
hold.

Assuming the matching is not maximum cardinality,
Edmonds' algorithm halts with a maximum weight augmenting path.
Any augmenting path $P$ of length $|P|$ has 
\begin{equation}
\label{LWEqn}
|P|=-w(P)+1.
\end{equation}
 So an augmenting path has maximum weight iff it is an
{\em sap}.

The last step of Phase 1 constructs $H$, the graph whose augmenting paths
correspond to the 
{\em saps} of $G$.
$H$ is formed from $G$ by contracting every positive blossom,
and keeping only the tight edges that join distinct vertices.%
\footnote{Edmonds' search may end before exploring some
tight edges. This does not affect the definition of $H$.}
As usual each edge of $H$ records its preimage in $E$,
allowing an augmenting path in $H$ to be converted to its preimage in $G$.

Phase 1 requires that
the augmenting paths in $H$ are precisely
the images of the {\em saps} of $G$.
We will show this using Corollary \ref{EdmondsAPCor}.

\case {$P$ is an {\em sap} in $G$} Consider any positive blossom $B$
with $P\cap \gamma(B)\ne \emptyset$. The corollary implies
$P\cap \gamma(B)$ is a path that starts with a matched edge
and ends with an unmatched edge incident to the base vertex $b$ of $B$.
So $P$ either
contains exactly 1 edge incident to $B$ (if $b$ is free),
or 2 edges incident to $B$, one being the matched edge incident to $b$.
In both cases the image of $P$ in $H$ is an alternating path. This makes
$P$ an augmenting path in $H$.

\case {$P$ is augmenting in $H$} Consider any blossom node
$B$ on $P$. $P$ contains an unmatched edge
incident to some vertex $v\in B$. Letting $b$ be the base vertex of $B$,
either $b$ is free or $P$ contains the matched edge incident to $b$.
In both cases we can add the $P(v,b)$ path through $B$. Doing this
for every $B$ on $P$ yields a path in $G$, with all edges tight.
The corollary shows $P$ is an {\em sap}.

\section{Phase 2}
\label{Phase2Sec}
 We can find a maximal set of augmenting paths in $H$
using the double depth-first search of Micali and Vazirani \cite{MV},
or the algorithm of Goldberg and Karzanov \cite{GK}, or the depth-first search
of Gabow and Tarjan \cite{GT89}. To make this paper complete 
Fig.\ref{ScaleAlg}
restates
the algorithm of \cite{GT89}.
The {\em find\_ap} algorithm is illustrated in Fig.\ref{SAPFig}. 
This section discusses the idea of the algorithm
and gives the high-level analysis,  for both correctness and the
linear time bound $O(m)$.
Appendix \ref{FAPApp}   completes the analysis.
The development 
is similar to \cite{GT89} but includes more details.

Note that the search of Micali-Vazirani \cite{MV}
requires an additional property of $H$, which is proved in
 Appendix \ref{DSHApp}.

\begin{algorithm}[t]
\DontPrintSemicolon

{\bf procedure} {\em find\_ap\_set}

initialize \S. to an empty graph and \P. to an empty set

{\bf for} {\em each vertex $v\in V$} {\bf do} $b(v)\gets v$
\tcc{$b(v)$ maintains the base vertex of $B_v$}

{\bf for} {\em each  free vertex $f$} {\bf do}

{\Indp 

{\bf if} {$f\notin V(\P.)$} {\bf then}

{\Indp

add $f$ to \S. as the root of a new search tree

{\em find\_ap}$(f)$

}}

\bigskip

{\bf procedure} {\em find\_ap}($x$)
\tcc{$x$ is an outer vertex}

\lnl{FLine}{\bf for} {\em each edge $xy\notin M$} {\bf do} \tcc{scan $xy$ from $x$}

\Indp

{\myif {$y \notin V(\S.)$}

{\goin \myif {$y$ is free} \tcc{$y$ completes an augmenting path} 

\lnl{ALine}{\goin\goin add $xy$ to \S. and add path $y P(x)$ to \P.\;

\goin\goin terminate every currently executing recursive call to {\em find\_ap}
}

\goin {\bf else} \tcc{grow step}

\goin\goin add $xy, yy'$ to \S., where $yy'\in M$

\goin\goin {\em find\_ap$(y')$}
}

}

\lnl{TLine}{\bf else} \myif {$b(y)$ is an outer proper descendant of $b(x)$ 
in \sm.} \tcc{blossom step}

\goin \tcc{equivalent test:\ $b(y)$ became outer strictly after $b(x)$}

\goin let $u_i$, $i=1,\ldots, k$ be the inner vertices in $P(y,b(x))$, ordered so $u_i$ precedes $u_{i-1}$


\goin \myfor {$i\gets 1\ {\bf to}\ k$} 


\lnl{BLine}\goin\goin \myfor {every vertex $v$ with $b(v)\in \{u_i,u'_i\}$, where $u_i u'_i\in M$} $b(v)\gets b(x)$

\goin\goin \tcc{this executes the blossom step for $xy$.\
each $u_i$ is now outer$.$}

\goin \myfor {$i\gets 1\ {\bf to}\ k$}  {\em find\_ap$(u_i)$}

\lnl{RLine}{\bf return}

\caption{Path-preserving depth-first search.}
\label{ScaleAlg}
\end{algorithm}

\begin{figure}[th]
\centering
\input{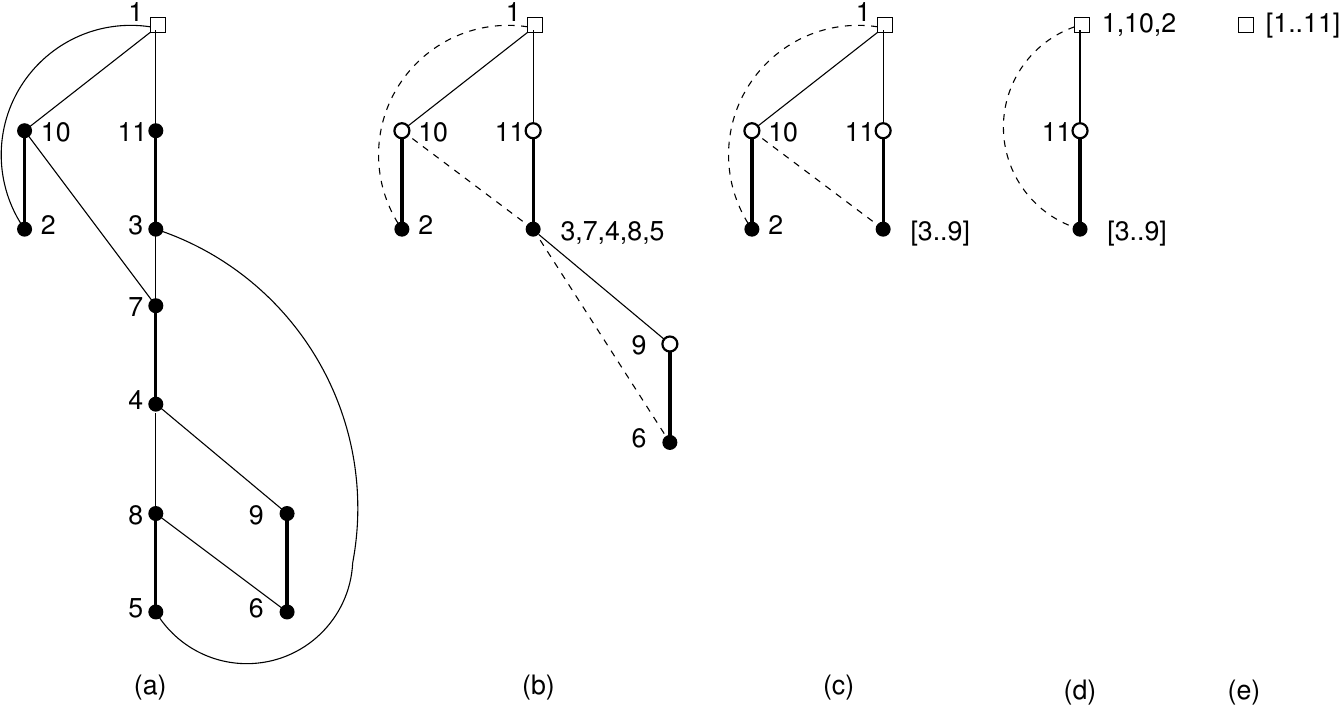_t}
 \caption{(a) Example graph for {\em find\_ap}. 
Vertices are numbered in the order they become outer.
(b)--(e) \os. after each of the four blossom steps of {\em find\_ap}. Dashed edges are not in \os.. $[i..j]$ denotes the set of consecutive integers 
$\{i,\ldots, j\}$.}
 \label{SAPFig}
 \end{figure}

We first introduce a variant of previous notation 
that is used in the rest of the paper: \sm.
denotes  the subgraph of \S. consisting of
the edges added 
in grow steps. Clearly \sm. consists of trees that
span \S.. Also
\os. is a contraction of \sm..
We use \sm. to state 
ancestry relationships (e.g., line \ref{TLine} 
of Fig.\ref{ScaleAlg}). These relations essentially hold in
\os. but \sm. has the advantage of being more stable.
Note that \sm. is not alternating. 
(For instance in Fig.\ref{SAPFig}(a) add a
matched edge $aa'$ with an unmatched edge from 11 to $a$.)

Next we  review Fig. \ref{ScaleAlg}. 
{\em find\_ap} implements a search of
Edmonds'
cardinality matching algorithm (the algorithm of
Fig.\ref{EdAlg} with no dual variables, every edge is
tight). 
In line \ref{ALine}
$P(x)$ is the naturally defined alternating path in $H$ from $x$ to $f$.
Specifically $P(x)$ is formed from the \os.-path from $B_x$ to $B_f$,
by traversing every blossom of {\em find\_ap}
using the appropriate  $P(v,b)$ path.
In line \ref{BLine} $P(y,b(x))$ can be traversed by
simply using the \os.-path from $B_y$ to $B_{x}$.
Note this simple property of the base vertex function $b$:
$b(v)$ is always an \sm.-ancestor of $v$.

The idea for
{\em find\_ap} is to 
 search for an augmenting path depth-first,
making sure that
all vertices currently being explored remain on  the current search path.
This is an obvious property of ordinary depth-first search.
It is also desirable for finding disjoint augmenting paths:
When we find an augmenting path and delete it from further consideration,
no problems are created by partially explored vertices 
remaining in the graph -- these vertices all get deleted.

To achieve this property blossom steps must be 
scheduled carefully. 
Fig.\ref{ScaleAlg} does this by delaying blossom steps.
(In Fig.\ref{SAPFig} 
$mate(2)$ is the 10th vertex to become outer, not the third.)
Also the blossom step explores the new outer vertices $u_i$ in 
order of decreasing $P(u_i)$ length.
(Vertex 7 is explored before vertex 8.)

It is easy to see that {\em find\_ap} correctly implements the grow and
blossom steps of Edmonds' cardinality matching algorithm.  
 Specifically $\S., \os., 
b(x),$ and $ P(x)$ correspond to their definitions.  
However it is not immediately clear that 
{\em
find\_ap} does every possible blossom step.
This is proved in Appendix A
(Lemma \ref{CompletenessLemma}).

We close this section with a simple proof that
{\em find\_ap\_set} is correct, assuming 
Lemma \ref{CompletenessLemma}. Correctness means that
 {\em find\_ap\_set} halts with 
\P. a maximal set of vertex-disjoint augmenting paths.
The above remarks imply that each path of \P. is augmenting.
So the issue is to prove maximality -- no augmenting path of $H$
is vertex-disjoint from \P..

At any point in the execution of {\em find\_ap\_set}
call an outer vertex $x$ {\em completely scanned}
if {\em find\_ap}$(x)$ has returned in line \ref{RLine}.
The other possibility for an outer $x$ -- $x$ is not completely scanned
--
holds when either
{\em find\_ap}$(x)$ 
has not been called, or it is currently executing, or 
it was terminated because
an augmenting path was discovered.

The important invariant is 

\NumText (I)
{Whenever 
line \ref{FLine} of {\em find\_ap}$(x)$ scans an edge,
$P(x)$ contains every  outer vertex that has not been completely scanned.}

\noindent
For instance when 
vertex 8 scans edge 8,6,
vertex 7 is not in $P(8)$ but it has been completely scanned.
It is easy to see  invariant (I)
is preserved when a recursive call is made:
In the grow step $P(y')$ contains
$P(x)$; 
in the blossom step $P(u_i)$  contains $P(u_j), i<j \le k$, and also $P(x)$.
So (I) holds on entry to {\em find\_ap}$(x)$. Any vertex that becomes
outer after this is completely scanned
whenever control returns to 
{\em find\_ap}$(x)$.
So (I) holds throughout the execution of
{\em find\_ap}$(x)$, and throughout the entire algorithm.

(I) implies that any augmenting path
added to \P. contains every  outer vertex that has not been completely scanned.
Put the other way, when {\em find\_ap\_set} halts
every outer
vertex not on a path of \P. has been completely scanned.
We conclude the following properties when {\em find\_ap\_set} halts:

\bigskip

{\narrower

{\parindent=0pt

\i Any free vertex
$\notin V(\P.)$
is outer.

\ii Any edge $uv$ with $u$ outer and $u,v\notin V(\P.)$ 
has $v$ inner or 
$B_u= B_v$.

}

}

\bigskip
\noindent
Note that \ii depends on Lemma \ref{CompletenessLemma}.

Now consider any alternating path $P$ that is disjoint from
$V(\P.)$ and 
starts at a free vertex $f$. 
We claim that every vertex of $P$ is inner or outer. Furthermore
for any outer blossom $B$ 
with $B\cap V(P)\ne \emptyset$, the first vertex of $P$ in $B$ is the
base of $B$,
which is either $f$ or the mate of
an inner vertex.

To prove the claim we argue by induction on the length of $P$.
The claim holds if $P$ is just the starting vertex $f$, by \xi. 
Inductively assume
$P$ has reached an outer vertex $x$, having already reached $b(x)$.
If the next vertex $y$ is outer \ii shows $b(y)=b(x)$. 
If $y$ is not outer \ii shows it is inner. Furthermore $y$ is followed by
$mate(y)$, which is the base of its blossom $B_{mate(y)}$.

Now observe that
$P$ does not contain a 
free vertex $f'\ne f$.
In proof, $f'$ is outer. The claim shows $P$ must enter $B_{f'}$
on its base $b$ and
$b\ne f'$.
But $f'$ is the base of any blossom containing it.
Hence $f'\notin P$.
We have shown that no augmenting path
 is disjoint from
$V(\P.)$, as desired.

We close this section with two high-level properties of
{\em find\_ap\_set} that lead to its linear time bound.

First,
any edge is scanned at most twice, once from each end. 
In proof observe 
the call {\em find\_ap}$(x)$ 
occurs when $x$ becomes a new outer vertex.
Thus  {\em find\_ap} is called at most once
for any given vertex $x\in V$.

Second, the test of line \ref{TLine} is convenient to prove correctness
of the algorithm, but is not straightforward to implement. The test of
the comment is easily implemented. Appendix \ref{FAPApp}
shows the two tests are equivalent.

\section{The data structure}
\label{DataStructureSec}
This section gives the data structures that show 
Phases 1 and 2 both use time $O(m)$. Thus the entire algorithm runs in 
time $O(\sqrt n m)$. 

The matching is represented using an array {\em mate},
where any vertex $v$ on matched edge $vv'$ has $mate(v)=v'$.
The forest \sm. is represented using a pointer $\ell(v)$ for every
outer vertex $v\in V$.
Specifically a grow step adds $y$ and $y'$  to \sm. by setting
$\ell(y')=x$. 

We now discuss each step of our algorithm in turn, 
presenting the data structure for it and verifying that linear time is achieved.
We begin with Phase 1.
Grow steps are trivial.

\paragraph*{Blossom steps}
We find the fundamental cycle $C$ of edge $xy$
in  time $O(|C|)$
by climbing the paths from $B_x$ and $B_y$ to the root in parallel.
This is accomplished
 using
{\em mate} and $\ell$ pointers, and a data structure to find
the base of a blossom $B_x$ given an arbitrary vertex $x$.

The data structure is
the incremental-tree set-merging algorithm of 
Gabow and Tarjan \cite{GT85}.
The operation $find(x)$ returns the base vertex of $B_x$.
$union$ operations are used to  contract $C$.
The incremental-tree version of \cite{GT85} works on a tree that grows by addition of leaves, so it works correctly on \sm..
The time for $O(m)$ $finds$ and $O(n)$ $unions$ 
is $O(m+n)$ and the space is  $O(n)$.

\paragraph*{Dual adjustment steps}
It is well-known how to implement dual adjustment steps efficiently using
(a) the parameter
$\Delta$, defined as the current sum of all dual adjustment quantities $\delta$
so far; and (b) an appropriate priority queue.

Our choice for (b) depends on the fact that $\Delta$ is always an integer
$\le n/2$. We first prove the upper bound.
At any point in the algorithm
let $P$ be any augmenting path.
Lemma \ref{EdmondsAPLemma} shows
$w(P)\le 2y(f)$. Then using
\eqref{LWEqn} gives
\[
n-1 \ge |P|=-w(P)+1\ge -2y(f)+1.
\] Rearranging gives
$y(f)\ge 1 -n/2$. Since $y$ values are initially 1, this implies 
$\Delta\le n/2$.

To prove integrality 
we claim that
all the $y(v)$ values, $v\in V(\S.)$, are always
integers of the same parity.
(Since a free vertex $f$ has $y(f)=1-\Delta$ 
this implies $\Delta$ is integral.) The claim  holds 
initially by (A$'$).
In the definition of $\delta$ in Fig.\ref{EdAlg}, any edge $e$
has
$y(e)-w(e)=y(e)$  an even integer.
So the set defining $\delta$ consists of integers, and $\delta$ is 
integral. 
The adjustment of $y$ values
in Fig.\ref{EdAlg} keeps all values $y(v), v\in V(\S.)$ integers of
the same parity.

Now we sketch the algorithm and data structure.  The priority queue
consists of a collection of lists $L(d)$, each containing the edges
that become tight when $\Delta$ has increased to $d$.  To get the next
edge for the search lists $L(\Delta')$ are examined, for
$\Delta'=\Delta, \Delta+1, \ldots$, until an edge that triggers a
grow, blossom, or augment step is found.  This also gives the next
value of $\Delta$.

Suppose a grow or blossom step makes vertex $u\in V$ outer.  Every
unmatched edge $e=uv$ is scanned.  $e$ is added to list $L(\Delta+d)$
where $d$ is $y(e)$ if $v\notin V(\S.)$ or $y(e)/2$ if $v$ is outer.

\bigskip

The next two steps are implemented using the representation $R_B$ for
every maximal blossom $B$.  In the data structure for $R_B$, the
children of any node $B'$ form a doubly linked ring. Each link also
records the edge $xy\in E$ that joins the two subblossoms.

\paragraph*{Constructing graph {\boldmath $H$}}
The first task is to identify the blossoms that become vertices
of $H$, i.e., the blossoms that are maximal immediately before the last
dual adjustment. During Edmonds' search
each blossom is marked with
the value of $\Delta$ when it is formed.
When the search ends each representation $R_B$ is
traversed top-down, and blossoms with the final value of $\Delta$
are discarded.
The remaining roots of $R_B$ representations are the 
maximal positive
blossoms that become
vertices of  $V(H)$.

To construct $E(H)$, we continue the top-down traversal
and label
every vertex of $V$ with the $V(H)$ vertex
containing it. Then every edge $e\in E$ is scanned and added to $E(H)$
if it is tight and joins
distinct $V(H)$ vertices, at least one being outer.
($z$ values are not needed for these edges.)
$H$ is represented as an adjacency structure, and the vertex labels are used
to add $e$ to the two appropriate adjacency lists.
$e$ also records its preimage in $G$, so augments can be performed in $G$. 

\bigskip

Turning to Phase 2, the blossom base function $b$ is maintained as in
Phase 1 using incremental-tree set-merging.

\paragraph*{Computing {\boldmath $P(x)$} in {\em find\_ap}}
First observe that an $R_B$ data structure
allows a path $P(v,b)$ to be computed in time 
$O(|B|)$ by following the recursive procedure sketched in Section 
\ref{EdAlgSec}. 
This suffices for our applications, since an augmenting path 
passes through a maximal blossom at most once. 
(In Phase 2 
if an augmenting path passes through a blossom $B$, the  vertices of $B$
never get re-explored.)
 A more careful approach
computes $P(v,b)$ in time
linear in its length. The idea is not to start at the root
of $R_B$ (as in Section \ref{EdAlgSec})
but rather start at the node whose child contains
the matched edge incident to $v$.

{\em find\_ap} builds $R_B$ representations for the blossoms it creates.
They are  used to
compute the augmenting paths $yP(x)$.
These paths are then converted
to paths in $G$ by adding in $P(v,b)$ paths that traverse blossom-vertices of
$H$.  
This is done using the $R_B$ representations from Phase 1.

\bigskip

Combining our correctness proof 
and the above data structures giving linear time, our goal is achieved:

\begin{theorem}
A maximum cardinality matching can be found in time $O(\sqrt n m)$ and space
$O(m)$.
\end{theorem}


\fi
\or
\input intro

\input bnotes
\input nca
\input alpha
\input bmatch
\input code
\input strong
\fi

\ifcase 1 
\or

\fi


\clearpage

\setcounter{section}{0}
\renewcommand{\thesection}{\Alph{section}}
\renewcommand{\thetheorem}{\Alph{section}.\arabic{theorem}}

\setcounter{equation}{0}
\renewcommand{\theequation}{\Alph{section}.\arabic{equation}}

\section{Analysis of {\em find\_ap\_set}}
\label{FAPApp}
We will consider
\sm. to be an ordered tree, where grow steps add the children of
an outer vertex in left-to-right order. 
We also assume  \os. inherits this order.
(Thus 
in Fig.\ref{SAPFig}(b) vertex
3 became outer after vertex 2.)  

\begin{lemma}
\label{RightmostPathLemma}
Any outer vertex $r$
that is not completely scanned has $b(r)$ on the rightmost path of \sm..
\end{lemma}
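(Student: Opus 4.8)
The plan is to establish the statement as an invariant maintained by \emph{find\_ap\_set}, proving it by induction on the steps that build \sm.. Throughout, the driving observation is that the search always advances from the most recently created outer vertex and that grow steps append children on the right; hence the vertex currently being scanned lies at the bottom of the rightmost path, and the vertices of the rightmost path are exactly its ancestors in \sm.. I would record this fact at the outset, since every case below amounts to locating $b(r)$ among the ancestors of the currently active vertex. The base case is immediate: a newly created root is an outer vertex that is its own base and constitutes the entire one-vertex rightmost path.

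For the inductive step I would dispose of the two easy moves first. A \textbf{grow step} from the active vertex appends a new inner vertex and a new outer vertex $y$ below it; the stretch of the rightmost path from the root down to the active vertex is unchanged, so every base that was on it stays on it, while $y$ satisfies $b(y)=y$ and occupies the new bottom. Any vertices displaced from the rightmost path by this move lie in a subtree whose scan had already finished, so they are completely scanned and the invariant does not constrain them. The same remark handles a \textbf{retreat} that abandons the current bottom blossom once it is completely scanned --- and, more generally, any step removing only completely scanned material: no not-completely-scanned vertex has its base in the removed part, and every surviving relevant base lay above it, hence remains on the shortened rightmost path.

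The crux is the \textbf{blossom step}, and this is where I expect the real work to lie, because a single step both rewrites $b$ over an entire set of vertices and promotes inner vertices to outer, so the invariant must suddenly account for brand-new not-completely-scanned outer vertices. The key point is that when the edge closing the blossom is scanned one of its endpoints is the active vertex; the base of the new blossom is the nearest common ancestor $c$ of the two endpoints in the contracted structure, and since $c$ is an ancestor of the active vertex it lies on the rightmost path. Every vertex absorbed into the blossom now has base $c$, so each newly outer, not-yet-scanned vertex has its base on the rightmost path; meanwhile any absorbed vertex carried off the rightmost path was, by the induction hypothesis together with the remark above, already completely scanned and remains so. The two technical points I anticipate needing the most care are verifying that absorbing an already completely scanned outer vertex does not reset its status --- so it cannot resurface as an unscanned vertex lying off the rightmost path --- and confirming that the nearest common ancestor is taken with respect to the current bases $b(\cdot)$, which is exactly what makes $c$ a vertex of the rightmost path.
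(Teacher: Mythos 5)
Your induction is anchored on a claim that is false, and it is refuted by the paper's own illustrating example. You assert at the outset that ``the vertex currently being scanned lies at the bottom of the rightmost path, and the vertices of the rightmost path are exactly its ancestors in \sm.,'' and every case of your induction leans on this. But after a blossom step an outer, not completely scanned vertex can lie entirely off the rightmost path: in Fig.\ref{SAPFig}(b), vertex 8 is outer and unscanned, 8 is not on the rightmost path, and only $b(8)=3$ is. Such a vertex later becomes the scanned vertex (when {\em find\_ap} of the vertex that triggered the blossom returns, the next active vertex may be another unscanned vertex of the same blossom), and at that moment it is neither at the bottom of the rightmost path nor are its \sm.-ancestors the rightmost path. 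This is exactly why the lemma speaks of $b(r)$ rather than $r$; your anchor, if true, would make the lemma trivial and misses its point. The same conflation of a vertex with its base breaks your grow case: when the active vertex grows a new child, the subtrees displaced from the rightmost path need \emph{not} be completely scanned --- they may contain newly outer, still unscanned vertices of a blossom (again the vertex-8 situation); what is true is only that the \emph{bases} of such vertices lie on the portion of the rightmost path at or above the active vertex's base, and hence survive. Your blossom case reaches the correct conclusion, but by the same invalid route (``$c$ is an ancestor of the active vertex, hence on the rightmost path''); the correct justification is the induction hypothesis applied to the scanning endpoint $x$ itself: $x$ is outer and not completely scanned, and the new base is $b(x)$, since the algorithm's test guarantees $b(y)$ is an outer proper descendant of $b(x)$, so the nearest-common-ancestor observation degenerates to $b(x)$.

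The paper's proof avoids all of this by inducting on a different object. It shows that whenever a vertex $x$ is made outer, every vertex $s$ on the alternating path $P(x)$ has $b(s)$ on the rightmost path of \sm., and then invokes invariant (I) from the main body: whenever {\em find\_ap}$(x)$ scans an edge, $P(x)$ contains every outer vertex that is not completely scanned. The path $P(x)$ winds through blossoms and therefore correctly accounts for off-path vertices like 8, whereas \sm.-ancestry of the active vertex does not. To repair your argument you would have to weaken your anchor to the form the paper intends --- the base $b$ of the active vertex is on the rightmost path, and every unscanned outer vertex $z$ has $b(z)$ an \sm.-ancestor of it --- and redo the grow and retreat cases in terms of bases (note also that a retreat changes neither \sm. nor the rightmost path, so nothing is ``removed'' or ``shortened'' there). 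As written, the induction does not close.
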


\example {The blossom step of Fig.\ref{SAPFig}(b) 
makes vertex 8 outer but not completely scanned.
8 is not on the rightmost path
of \sm. but $b(r)=3$ is.} 

\begin{proof}
A simple induction shows that
when a vertex $x$ is made outer, 
every vertex $s\in P(x)$ has $b(s)$ on the rightmost path of \sm.. 
Recall invariant (I) 
says
whenever 
{\em find\_ap}$(x)$ scans an edge,
$P(x)$ contains every  outer vertex $r$ that has not been completely scanned.
So $b(r)$ is on the rightmost path of \sm..
\end{proof}

\begin{lemma}
\label{SRelationsLemma}
At any point in the algorithm, 
consider an edge $rs$ where $r$ is outer and $s\in V(\S.)$.
Either $s$ is inner and left of $r$ in \sm., 
or $b(r)$ and $b(s)$ are related in \sm..
\end{lemma}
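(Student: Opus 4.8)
The plan is to prove the dichotomy by one case analysis on how $b(r)$ and $b(s)$ sit in \sm., tracking only the bases. If $b(r)$ and $b(s)$ are related we are already in the second alternative, so assume they are \emph{unrelated} and let $w$ be the nearest common ancestor of $b(r)$ and $b(s)$ in \sm.; since neither base is an ancestor of the other, each lies in a distinct child-subtree (branch) of $w$, and one of these branches is strictly left of the other. I will lean on one elementary fact about the ordered tree: a node on the rightmost path of \sm. has nothing strictly to its right, so every other node is an ancestor of it, a descendant of it, or to its left. Consequently the base that lies in the \emph{left} branch at $w$ is \emph{not} on the rightmost path.

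The engine of the argument is a statement about whichever endpoint has its base in the left branch. Write $\{u,u'\}=\{r,s\}$ with $b(u)$ in the left branch and $b(u')$ in the right branch, and suppose $u$ is outer. Since $b(u)$ is not on the rightmost path, Lemma~\ref{RightmostPathLemma} forces $u$ to be completely scanned; in particular $u$ scanned the edge $uu'$ while it was outer. Because \emph{find\_ap} is a left-to-right depth-first search whose recursion stack is the rightmost path, the right branch at $w$ is grown only after the left branch (containing $u$) has been completely scanned; hence $u'\notin V(\S.)$ throughout $u$'s scan, and in particular when $u$ scanned $uu'$. As $u'$ is matched (it lies in $V(\S.)$ now), that scan was a grow step inserting $u'$ as a descendant of $u$ in \sm.. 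Grow steps only append leaves and blossom steps never relocate vertices, so $u'$ remains a descendant of $u$; and as each base is an ancestor in \sm. of every vertex of its blossom, both $b(u)$ and $b(u')$ are ancestors of $u'$. Two ancestors of a common vertex are related, so $b(r)$ and $b(s)$ are related, contradicting our assumption.

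Now read off the conclusion. If the left branch contained $b(r)$, the engine applies with $u=r$---which is outer by hypothesis---and yields a contradiction, so this configuration cannot occur. Hence the left branch contains $b(s)$, which is exactly the statement that $s$ is left of $r$ in \sm.. Applying the engine with $u=s$ then shows $s$ cannot be outer, so $s$ is inner. Thus $s$ is inner and left of $r$, the first alternative, and the lemma follows.

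The step I expect to carry the real weight is the growth-order claim buried in the engine: that the right branch at $w$ is grown only after the left branch is completely scanned, so that $u'\notin V(\S.)$ at the instant $u$ scans $uu'$, and that this scan must therefore be a grow step placing $u'$ below $u$. This is precisely where the ordered, depth-first nature of \emph{find\_ap} and the identification of its recursion stack with the rightmost path must be invoked with care; invariant~(I) and Lemma~\ref{RightmostPathLemma} furnish the two facts---``not completely scanned $\Rightarrow$ base on the rightmost path'' and the left-to-right order of grow steps---that make it rigorous. A minor point to dispatch separately is that a \emph{free} $u'$ would yield an augmenting path rather than a grow step; but since $u'\in V(\S.)$ in the present structure it is matched, so the grow step is the only possibility.
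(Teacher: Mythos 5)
Your strategy is genuinely different from the paper's: you argue directly at a fixed moment in time, taking the nearest common ancestor $w$ of $b(r)$ and $b(s)$ and trying to show the endpoint whose base lies in the left branch must have scanned the edge before the right branch existed; the paper instead proves the lemma by induction over the algorithm's steps, checking that each grow and blossom step preserves it (with the auxiliary observation that a completely scanned outer vertex has all its neighbors in \S.). Unfortunately, the temporal claim you yourself identify as carrying the real weight --- ``the right branch at $w$ is grown only after the left branch (containing $u$) has been completely scanned, hence $u'\notin V(\S.)$ throughout $u$'s scan'' --- is false as stated, because blossom steps re-activate the left branch. Concretely: let the free root $f$ grow $f\to c_1\to c_1'$ (with $c_1'$ having no useful edges), retreat, grow $f\to c_2\to c_2'$ and a further pair $p,p'$ under $c_2'$, retreat to $f$, and only then scan an edge $fc_1'$. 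The blossom step on $fc_1'$ makes $c_1$ outer; $c_1$ is then scanned and grows a fresh pair $d,d'$ \emph{inside the $c_1$-subtree, after the right branch is fully built}. Now let $d'$ have an edge to the inner vertex $p$. Here $b(d')=d'$ and $b(p)=p$ are unrelated, $b(d')$ sits in the ($c_1$-side) branch at $w=f$ that your frame calls left, and $u=d'$ is outer --- yet $d'$ scans $d'p$ long after $p$ entered \S., so no grow step occurs and your engine yields no contradiction. The claim ``blossom steps never relocate vertices,'' which you invoke, points the wrong way here: positions are indeed fixed, but scanning activity returns to old subtrees.

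What rescues the lemma in this execution is exactly the ordered-tree bookkeeping your argument never engages: in the paper's intended order the growth spawned by a re-activated blossom vertex counts as \emph{rightmost} (this is why \os. inherits the order and why, in the example after Lemma \ref{RightmostPathLemma}, vertex 3 counts as becoming outer after vertex 2), so in that order $p$ is left of $d'$ and the first alternative holds; in the naive insertion order your left/right frame presupposes, both your temporal claim and the branch dichotomy break. Repairing the argument requires, at minimum, showing that bases only move upward in \sm. (so an endpoint whose base is currently strictly inside one branch was never absorbed into a blossom based at or above $w$) and then specifying precisely how blossom steps interact with the left-to-right order before invariant (I) can be applied at the grow step that creates the right branch. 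That interaction is precisely what the paper's step-by-step induction packages case by case; asserting it wholesale from ``\emph{find\_ap} is a left-to-right DFS whose stack is the rightmost path'' leaves the crux unproven, and in the form you stated it, wrong.
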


\example {Consider edge 8,6 immediately after the blossom step forming
Fig.\ref{SAPFig}(b).
$b(8)=3$ is related to $b(6)=6$ in \sm..
But 8 itself is not related to 6 in \sm..
Neither is 5, the mate of 8.}

\begin{proof}
We will show that every grow and blossom step preserves the lemma.
We start with this preliminary observation: Once an outer vertex $r$ has been completely scanned,
any adjacent vertex $s$ is in \S..

Consider a grow step.
It adds new vertices $y,y'$, with $b(y)=y$, $b(y')=y'$.

\subcase {$r\ne y'$} If $r$ is not completely scanned,
Lemma \ref{RightmostPathLemma}
implies
$b(r)$ is related to both
$b(y)$ and $b(y')$. If $r$ is completely scanned the
preliminary observation shows $s$ was in \S. before the grow step.
So $s\ne y,y'$.
We conclude the lemma is preserved.

\subcase {$r=y'$} 
Since $r$ is the rightmost vertex of \sm.,
any vertex is either related to $r$ or to the left of $r$.
So the lemma holds if $s$ is inner.
If $s$ is outer it cannot be left of $r$,
since then the preliminary observation
shows $r$ was in \S. before the grow step.

\bigskip

Now consider a blossom step. We consider the possibilities for $r$.

\subcase {$r$ is  a vertex whose $b$-value is changed to $b(x)$}
(These are the vertices that enter $B_x$.)
Any vertex $s$ has $b(s)$ either related to $b(x)$ or left of $b(x)$
($b(x)$ is on the rightmost path by Lemma \ref{RightmostPathLemma}. 
So we can assume $b(s)$ is left
of $b(x)$. This implies $s$ is also left of $b(x)$.
$s$ cannot be outer (as before, $s$ is completely scanned,
so $r$ would be in \S. before it becomes a descendant of $b(x)$). 
So $s$ is inner as desired.

\subcase {$r$ is outer and $b(r)$ is not changed to $b(x)$} 
We can assume $s$ is a vertex that enters $B_x$. We can further assume
$b(r)$ is not related to $b(s)=b(x)$. 
Thus $b(r)$ is to the left of $b(x)$.
As before $r$ is completely scanned, 
making $s$ in \S. before it becomes a descendant of $b(x)$.
\end{proof}

\begin{lemma}
\label{InnerDescendantLemma}
At any point in the algorithm
let $t$ be an inner vertex, whose outer mate $t'$ is completely scanned,
and $s$  an inner \sm.-descendant of $t$.
A blossom step that makes $s$ outer
makes $b(s)=b(t)$.
\end{lemma}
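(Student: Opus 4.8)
The plan is to show that the same blossom step also absorbs $t$ into $B_x$, so that afterward $b(t)=b(x)=b(s)$. Recall the step runs while some \emph{find\_ap}$(a)$ scans an edge $ab$ with $a,b$ both outer, and that the base $b(x)$ of $B_x$ is the nearest common \sm.-ancestor of $b(a)$ and $b(b)$; every inner vertex lying on an arm of the new blossom, i.e.\ on the \sm.-path from $a$ up to $b(x)$ or from $b$ up to $b(x)$, is made outer with new base $b(x)$. Since the step makes $s$ outer, $s$ lies on an arm, so $b(x)$ is a proper \sm.-ancestor of $s$. As $t$ is an \sm.-ancestor of $s$, and $t$ is inner while $b(x)$ is outer (hence $t\ne b(x)$), exactly one of two situations occurs: $b(x)$ is a proper \sm.-ancestor of $t$, or $t$ is a proper \sm.-ancestor of $b(x)$. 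In the first situation $t$ is an inner vertex on the same arm as $s$, strictly between $b(x)$ and $s$, so the step makes it outer with base $b(x)$, giving $b(t)=b(x)=b(s)$ as required. So it suffices to rule out the second situation.

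For the second situation, suppose $t$ is a proper \sm.-ancestor of $b(x)$. Since the only \sm.-child of the inner vertex $t$ is its mate $t'$, this forces $b(x)$, and hence its descendant $b(a)$, to be a weak \sm.-descendant of $t'$. I will contradict this. The key implication is that \emph{if $b(a)$ is not a \sm.-descendant of $t$, then the second situation is impossible}: indeed $b(x)$ is then a common \sm.-ancestor of $b(a)$ and of $s$ (the latter because $s\in B_x$), while $b(a)$ avoids the subtree of $t$ and $s$ lies properly inside it; since $t$ is not an ancestor of $b(a)$, every common \sm.-ancestor of $b(a)$ and $s$ is a \emph{proper} ancestor of $t$, contrary to the second situation. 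Because the subtree of $t$ is $t$ together with the subtree of $t'$, it therefore remains only to show that $b(a)$ is not a weak \sm.-descendant of $t'$.

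The main obstacle is exactly this location step: deducing from ``$t'$ completely scanned'' that the base $b(a)$ of the currently scanned vertex is not weakly below $t'$ in \sm.. I expect to obtain it from invariant~(I) together with the recursive (depth-first) discipline of \emph{find\_ap}: once \emph{find\_ap}$(t')$ has returned, $t'$ is completely scanned and control has passed to a proper \sm.-ancestor of $t'$, so any subsequent call that scans an edge has its path $P$ avoiding the subtree strictly below $t'$; by invariant~(I) that path carries every not-completely-scanned outer vertex, which pins $b(a)$ outside the subtree below $t'$. The delicate point to nail down is the borderline configuration $b(x)=t'$ (apex exactly at $t'$), where the second situation would yield $b(s)=b(t')\ne b(t)$; here I must use that $t'$ itself is completely scanned to argue that the search has permanently left $t'$'s blossom and subtree, so that no trigger edge $ab$ encountered afterward can close a cycle with apex $t'$ through a still-inner descendant $s$ of $t$. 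Once this is established, the second situation cannot arise, and the first situation delivers the lemma.
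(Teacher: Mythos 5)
You have the right local picture: at the blossom step that makes $s$ outer, the absorbed vertices form the tree path from the triggering blossom's base up to the step's base, so the dichotomy ``base a proper ancestor of $t$'' versus ``base weakly below $t'$'' is exactly the right split, and your situation~1 is sound (with one small repair: since a \emph{chain} of blossom steps may precede the one that makes $s$ outer, $t$ may already be outer at that step, so you should say the step re-bases the whole tree path through $t$, giving $b(t)=b(x)$, rather than that it ``makes $t$ outer''). But the heart of the lemma is your situation~2, and the proposal leaves it unproven --- you flag it yourself as ``the delicate point to nail down'' --- and the mechanism you sketch for it is incorrect. Your claim that once {\em find\_ap}$(t')$ has returned, ``any subsequent call that scans an edge has its path $P$ avoiding the subtree strictly below $t'$'' is false: a later blossom step whose base is a \emph{proper ancestor} of $t$ can absorb inner vertices below $t'$ and make them outer again, and those vertices are subsequently scanned, so the scanner itself (and hence its path $P(\cdot)$) lies strictly below $t'$. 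This is precisely the phenomenon illustrated after Lemma~\ref{RightmostPathLemma} (vertex 8 becomes outer but not completely scanned, off the rightmost path); note that lemma only places $b(r)$, not $r$, on the rightmost path. So invariant~(I) does not pin the scanner outside $t'$'s subtree. What is true, and what you actually need, is that every such late scanner $a$ below $t'$ has \emph{base} $b(a)$ a proper ancestor of $t$: all blossoms based weakly below $t'$ are frozen once $t'$ is completely scanned (every vertex made outer during {\em find\_ap}$(t')$ is by then completely scanned), and vertices below $t'$ made outer afterward inherit bases above $t$. Proving that requires its own induction over the blossom steps occurring after {\em find\_ap}$(t')$ returns --- it is essentially the content of the lemma itself, not a routine consequence of invariant~(I).

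The paper sidesteps this temporal induction by inducting on the tree path $P$ from $t$ to $s$. It takes $u$, the first inner vertex of $P$ made outer by a blossom step (deepest in case of ties), triggered by an edge $xy$ with $b(x)$ an ancestor of $b(y)$; since every outer vertex on $P$ became outer while {\em find\_ap}$(t')$ was executing and is therefore completely scanned, while the scanner's base is not, $b(x)$ cannot lie on $P$ and so is a proper ancestor of $t$, yielding $b(t)=b(u)=b(u')$ at that step. The inductive hypothesis, applied to the next inner vertex $v$ of $P$ (whose outer mate $v'$ is likewise completely scanned) together with $b_1(v)=b_1(u')$, then chains $b_1(t)=b_1(u')=b_1(v)=b_1(s)$ at the step making $s$ outer. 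Your single-step dichotomy would also deliver the lemma if situation~2 were ruled out, but ruling it out is the missing core, and the paper's path induction is the machinery that supplies it.
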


\begin{proof}
Let $P$ be the \sm.-path from $t$ to $s$.
We prove the lemma by induction on $|P|$.

Among all the inner vertices on $P$,
let $u$ be the first to become outer in a blossom step.
(If there is more than one choice take $u$ as deep as possible.)
Let that blossom step be triggered by edge $xy$
where $b(x)$ is an ancestor of $b(y)$.
$u$ is an \sm.-ancestor of $b(y)$ and $b(x)$ is an ancestor of $u$.

Every outer vertex $r$ on $P$ is completely scanned, since $t'$ is.
(This follows since $r$ became outer while {\em find\_ap}$(t')$ 
was executing.)
So $b(x)$ is not on $P$.
Thus $b(x)$ is a proper ancestor of $t$. 
Letting $u'$ be the mate of $u$, the blossom step sets
\begin{equation}
\label{b1biEqn}
b(t)=b(u)=b(u').
\end{equation}

If $u=s$ we are done. Otherwise 
let $v$ be the inner vertex that follows $u$ on $P$.
Let $v'$ be its mate. As already mentioned, $v'$ is completely scanned.
So the inductive assertion holds for $v$ and $s$.
Consider the blossom step that makes $s$ outer.
Let $b_1$ denote the $b$ function at the end of this step.
The inductive assertion shows
\[b_1(v)=b_1(s).\]

 Since $v$ was inner, we also have $b_1(v)=b_1(u')$.
\eqref{b1biEqn} implies $b_1(t)=b_1(u')$. Combining equations gives
$b_1(t)=b_1(s)$. This completes the induction.
\end{proof}

\begin{lemma}
\label{CompletenessLemma}
At any point in the algorithm,
let $rs$ be an edge 
that has been scanned from both  its ends.
Then
$b(r)=b(s)$.
\end{lemma}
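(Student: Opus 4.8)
The plan is to localize everything at the later of the two scans. Since an edge is examined from a vertex only while that vertex is outer, and a vertex that has once become outer stays outer and present in the search tree for the remainder of the search, I would first record which endpoint is scanned second. Relabel so that the scan from $r$ occurs at time $a$ and the scan from $s$ at time $c>a$. Then $r$ is outer at time $a$, hence still outer at time $c$, while $s$ is outer at $c$ because it is the vertex doing the scanning. Because a blossom never dissolves during the search, it suffices to prove $b(r)=b(s)$ at time $c$; monotonicity of the $b$-function then propagates the equality to the arbitrary query point.

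The main step is to show that the examination of $rs$ at time $c$ merges $r$ and $s$ into one blossom. I would apply Lemma~\ref{SRelationsLemma} to the edge $rs$ with outer endpoint $s$. Since $r$ is outer, the alternative ``$r$ is inner and to the left of $s$'' is impossible, so the lemma forces $b(r)$ and $b(s)$ to be related. An edge joining two outer vertices whose bases are related is exactly a blossom-step trigger, namely the configuration in which one base is an ancestor of the other. Hence if $b(r)\neq b(s)$ the scan at time $c$ fires a blossom step that contracts the tree path carrying both $b(r)$ and $b(s)$ into the new base, giving $b(r)=b(s)$; and if $b(r)=b(s)$ already there is nothing to prove. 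Lemma~\ref{RightmostPathLemma}, applied to the vertex being scanned, places $b(s)$ on the rightmost path of \sm.. This pins the direction of the ancestor relation and certifies that the triggered step is legitimate.

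The step I expect to be the real obstacle is certifying that the examination at time $c$ is not inert, i.e.\ that a blossom step is genuinely executed (or has already been executed earlier) rather than the edge being silently passed over; here the two ways the non-scanning endpoint can have reached its outer status must be reconciled. If that endpoint entered the tree by a grow step at the first scan, its parent in the search tree is the scanning vertex, so the blossom step that later makes it outer contracts a path through that parent and equalizes the two bases the instant it turns outer. If instead it was already present and merely passed over as an inner vertex, I would verify that the merge has taken place by time $c$ by tracing the inner-to-outer transition with Lemma~\ref{InnerDescendantLemma}: as the relevant inner vertex, whose outer mate is completely scanned, is absorbed, its base is forced to that of its governing inner ancestor, and iterating this identification up the path ties $b(s)$ to $b(r)$. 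Combining these cases with the relatedness furnished by Lemma~\ref{SRelationsLemma}, and finally invoking the fact that bases only coarsen during the search, yields $b(r)=b(s)$ at the query point.
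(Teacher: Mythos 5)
You have identified the right ingredients (Lemmas~\ref{SRelationsLemma} and \ref{InnerDescendantLemma}, persistence of blossoms), but your main step is wrong because the blossom test is not symmetric. When a vertex $x$ scans an edge $xy$ with $y$ outer, a blossom step fires only if $b(y)$ is an outer \emph{proper descendant} of $b(x)$ --- equivalently, $b(y)$ became outer strictly after $b(x)$ (line~\ref{TLine} and its equivalent test). So at your time $c$, if $b(r)$ is a proper \emph{ancestor} of $b(s)$, the scan from $s$ is inert: ``one base is an ancestor of the other'' is not ``exactly a blossom-step trigger.'' Ruling out this inert configuration is essentially the entire content of the lemma, and your appeal to Lemma~\ref{RightmostPathLemma} does not do it: $b(s)$ lying on the rightmost path is compatible with $b(r)$ being either an ancestor or a descendant of $b(s)$, so it does not ``pin the direction'' at all.

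Your attempted repair is also incomplete in two ways. First, your case split on the non-scanning endpoint (entered by a grow step at the first scan, or passed over as an inner vertex) omits the case where $s$ is already \emph{outer} when $r$ scans $rs$; there the merging event may be the \emph{first} scan, and to know which of the two scans fires one needs the paper's observation that if $b(r)$ is an ancestor of $b(s)$ the first time both are outer, this orientation persists until the bases coincide --- a stronger invariant than ``bases only coarsen,'' which preserves equality but not orientation. Second, in the inner case you never actually meet the hypotheses of Lemma~\ref{InnerDescendantLemma}: one must exhibit the specific vertex $t$, the first inner vertex on the tree path from $b(r)$ to $s$, verify that its outer mate is completely scanned at the moment $r$ scans $rs$, and then --- after the lemma yields $b_1(t)=b_1(s)$ at the step making $s$ outer --- supply the separate argument that $t$ having become outer forces $b_1(t)=b_1(r)$. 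Your phrase ``iterating this identification up the path'' ties $b(s)$ only to the inner ancestor, not to $r$. The paper's proof avoids your difficulty entirely by working from the first scan: it fixes the ancestor orientation once and for all and shows $b(r)=b(s)$ holds by the time $s$ becomes outer (grow and inner cases) or is created by the blossom step at that scan (outer case), so the second scan plays no role. As written, your argument has a genuine gap at its central step.
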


\begin{proof}
Whenever $r$ and $s$ are both outer,
$b(r)$ and $b(s)$ are related in \sm.
(Lemma \ref{SRelationsLemma}). 
Let $b(r)$ be an ancestor of $b(s)$
the first time both are outer.
Although $b(r)$ and $b(s)$ may change over time,
$b(r)$ will always be an ancestor of $b(s)$.

Consider the three possibilities for $s$  when
$rs$ is scanned from $r$. 

\case {$s$ is not in the search forest} A grow step makes $s$ an inner child of $r$. 
Eventually $s$ becomes outer in a blossom step. The new blossom has an outer base vertex, so the blossom includes $r$, i.e., $b(r)=b(s)$.

\case {$s$ is outer} Clearly a blossom step is executed, making $b(r)=b(s)$.
 
\case{$s$ is inner} When $rs$ is scanned from $r$ 
let
$t$ be first inner vertex on the \sm.-path from $b(r)$ to $s$.
When $r$ scans $rs$,
$t'=mate(t)$ is completely scanned.
Now apply Lemma \ref{InnerDescendantLemma} to $t$ and $s$.
The blossom step that makes $s$ outer makes 
$b_1(t)=b_1(s)$. Since $t$ has become outer $b_1(t)=b_1(r)$.
Thus $b_1(s)=b_1(r)$ as desired.
\end{proof}

To implement the algorithm efficiently we change  the test for a blossom step,
line \ref{TLine},
to the test  of the comment. We will show the two tests are equivalent, i.e.,

\bigskip

{$b(y)$ is an outer proper descendant of $b(x)$ 
in \sm. iff $b(y)$ became outer strictly after $b(x)$.}

\bigskip

To prove the if direction
assume $b(x)$ and $b(y)$ are both outer.
As blossom bases they both became outer when they were added to \S..
Edge $xy$ makes $b(x)$ and $b(y)$ related (Lemma \ref{SRelationsLemma}).
So if  $b(y)$ was made outer strictly after $b(x)$ it was added to \S.
after $b(x)$, i.e., it
descends from $b(x)$.
Thus the condition of line  \ref{TLine} 
holds.

The only if direction is obvious, since any vertex is added to \sm. after
its ancestors. 

\section{Searching from the middle}
\label{DSHApp}
The algorithm of Micali and Vazirani \cite{MV} is based on
a ``double depth-first search'': This search begins at
an edge $e=uv$.
It attempts to complete an augmenting path using disjoint paths from each of
$u$ and $v$
to a free vertex. 
This is done with
two coordinated depth-first searches, 
one starting at $u$, the other at $v$.

The key fact for this approach
is a characterization of the starting edge $e$.
We will begin by describing the conditions satisfied by $e$, using
our terminology.
Then we prove that any augmenting path
contains such an edge $e$. 
Then we discuss the implications of this structure, including
how the DDFS of \cite{MV} can be used for our Phase 2.

We start with terminology based on the state of the search
immediately before the last dual adjustment.
Let $T'$ be the set of edges of $G$ that are tight at that time.
Let $D_1 \cup D_2$ 
be the set of edges that become tight 
in the last dual adjustment, where
$D_1$ refers to a grow step and $D_2$ is for a blossom step.
So $e\in D_1$ has
$y'(e)=\delta$ with one end of $e$ outer and the other not in \S..
 $e\in D_2$ has
$y'(e)=2\delta$ with both  ends of $e$ outer. (Recall $w(e)=0$.)
Here $y'$ is the dual function right before the last dual adjustment,
and ``outer'' and \S. also refer to that time.

\def\mycenter #1 {\hbox to \hsize{\hfill{#1}\hfill}}

\begin{lemma}
Any maximum weight augmenting path can be written as\\
\mycenter{$P_1,Q,P_2$}
\newline
\noindent
where 

each $P_i$ is an even length alternating path from a free vertex
to an outer vertex, $P_i\con T'$,

$Q$ has the form 
$(e)$ with $e\in D_2$, or
$(g_1,e,g_2)$ with
$g_1,g_2\in D_1$.

\end{lemma}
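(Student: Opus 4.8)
The plan is to prove the decomposition in three moves: first show that $P$ is a tight path and classify its edges by type, then use a single slack-counting identity to force $P$ to leave the search structure exactly once, and finally read off the two legs $P_1,P_2$.

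First I would record how slacks behave under the last dual adjustment. By the characterization of a maximum weight augmenting path, every edge of $P$ is tight for the dual $y$ produced by the last adjustment. The adjustment lowers outer duals by $\delta$, raises inner duals by $\delta$, and leaves vertices not in \S. unchanged, so the edges that are tight afterwards are exactly the matched edges and the structural $T'$ edges (outer--inner tree edges and edges inside a blossom), together with the newly tight edges $D_1\cup D_2$. In particular $P$ uses no inner--inner, inner--(not in \S.), or outer--(not in \S.) edge of $T'$. Writing the slack as $y'(e)=y'(u)+y'(v)+\sum_{B\ni e}z'(B)$ (recall $w(e)=0$, and $z'$ denotes the blossom duals), we have for each $e\in P$ that $y'(e)=0$ if $e\in T'$ (in particular for every matched edge), $y'(e)=\delta$ if $e\in D_1$, and $y'(e)=2\delta$ if $e\in D_2$.

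The key step is to evaluate $\sum_{e\in P}y'(e)$ in two ways. Directly, only the $D_1$ and $D_2$ edges contribute, giving $\delta\,|P\cap D_1|+2\delta\,|P\cap D_2|$. Expanding each $y'(e)$ into its vertex and blossom parts and using that matched edges are tight, the vertex terms telescope along the alternating path and leave $y'(f_1)+y'(f_2)$, while the blossom terms cancel because $P$ meets each blossom in an even-length alternating subpath. Using the invariant that each free vertex satisfies $y'(f)=\delta$ just before the last adjustment, I obtain \[ \delta\,|P\cap D_1|+2\delta\,|P\cap D_2| \;=\; y'(f_1)+y'(f_2) \;=\; 2\delta, \] so $|P\cap D_1|+2\,|P\cap D_2|=2$. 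Hence either $P$ contains one $D_2$ edge and no $D_1$ edge, or two $D_1$ edges and no $D_2$ edge.

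In the first case the unique edge $e\in D_2$ joins two outer vertices and I take $Q=(e)$. In the second case I trace the excursion outside the search structure: a $D_1$ edge $g_1=u_1w_1$ enters a vertex $w_1\notin V(\S.)$, whose mate $w_2$ is also not in \S., so $e=w_1w_2$ is matched and lies in $T'$; the edge of $P$ leaving $w_2$ is unmatched and tight, and since a vertex not in \S. has no incident tight edge other than its matched edge, it must be the second $D_1$ edge $g_2=w_2u_2$, giving $Q=(g_1,e,g_2)$. In either case, deleting $Q$ splits $P$ into two segments joining $f_1,f_2$ to the outer endpoints of $Q$; each uses only $T'$ edges, stays inside the search structure, begins with an unmatched edge at its free vertex and ends with a matched edge at its outer vertex, and hence has even length. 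These segments are $P_1$ and $P_2$.

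I expect the main obstacle to be the slack-counting identity of the third paragraph: verifying that the blossom contributions $z'(B)$ genuinely cancel (which requires that $P$ crosses every blossom through an even-length alternating subpath) and that $y'(f_1)+y'(f_2)=2\delta$. The companion structural fact used in the last step---that a vertex not in \S. carries no tight edge besides its matched edge, so the excursion cannot wander and has length exactly three---is the other delicate point. Both should follow from the search invariants and the level structure of tight edges established earlier, but they are where the real content lies; the remaining bookkeeping is routine.
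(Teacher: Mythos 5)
The proposal is correct and takes essentially the same route as the paper's proof: classify the edges of the maximum weight path by their $y'$-slack ($T'$, $D_1$, $D_2$ with slacks $0,\delta,2\delta$), show the total slack along $P$ is exactly $2\delta$ so that $P$ contains either one $D_2$ edge or two $D_1$ edges, and in the latter case collapse the excursion outside \S. to the single matched tight edge $v_1v_2$ because unmatched edges with no end in \S. are not tight. The one deviation is where you obtain the $2\delta$ total by telescoping together with the normalization $y'(f)=\delta$: the paper instead invokes Corollary \ref{EdmondsAPCor} (maximum weight iff $w(P)=2y(f)$, with each positive blossom crossed by an even alternating subpath so the $z$ terms cancel) plus $y(f)=y'(f)-\delta$ --- the same slack computation, but independent of how the free duals are normalized, so if the invariant $y'(f)=\delta$ fails under the paper's dual initialization you should substitute the corollary's criterion, after which the rest of your argument goes through unchanged.
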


\remark {Clearly $e$ is unmatched in the first form and  matched in the second.
Neither end of $e$ is in \S. in the matched form.}

\begin{proof}
Let $y$ be the final dual function.
The dual adjustment step shows that any free vertex $f$ has
$y(f)=y'(f)-\delta$.
As mentioned in the proof of Corollary \ref{EdmondsAPCor}
an augmenting path $P$ has maximum weight iff
$w(P)=2y(f)$. Thus
\begin{equation}
\label{OldYEqn}
w(P)=2y'(f)-2\delta. 
\end{equation}
Furthermore the corollary shows 
that for any positive blossom $B$,
$P\cap \gamma(B)$ is an even length alternating
path, so $z(B)$ makes no net contribution to $w(P)$.
Thus $P$ contains edges that are not tight in $y'$, in fact these edges belong
to $D_1\cup D_2$ and have
total slack
$2 \delta$.

Suppose $P$ contains an
edge $e\in D_2$.
Since 
$y'(e)=2\delta$, $P$ contains exactly 1 such edge.
The properties of the lemma for both $P_i$ and  $Q$ follow easily.

The other possibility is that $P$ contains exactly two edges 
$g_1,g_2\in D_1$. Each $g_i$ is unmatched and has an end $v_i\notin \S.$.
$P$ must contain a $v_1v_2$-subpath of edges in $T'$.
It must consist of just one edge $v_1v_2\in M$,
since unmatched edges with no end in \S. are not tight.
The properties of the lemma for both $P_i$ and  $Q$ follow.
\end{proof}

It may not be clear how
{\em find\_ap} succeeds in ignorance of this structure.
So we take  a more detailed look. 
We start with a simple 
fact:

\begin{proposition}
\label{InnerInnerProp}
No edge $uv\in T'$ joins 2 inner vertices.
\end{proposition}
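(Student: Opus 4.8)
The plan is to argue by contradiction from the dual variables, showing that the slack of such an edge could never have been driven down to zero. Suppose $uv\in T'$ joins two inner vertices $u$ and $v$. First I would dispose of the matched case: an inner vertex is matched and its mate is outer — it is made outer by the very grow step that makes the vertex inner — so a matched edge always joins an inner vertex to an outer one. Hence $uv\notin M$, and its slack in $y'$ is $y'(u)+y'(v)-w(uv)$ adjusted by the $z$-values of any blossoms containing it.

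Next I would track how this slack evolves over the course of the whole search, using two basic properties of the search procedure. (i) Once a vertex is outer it stays outer: grow and blossom steps only ever turn unlabeled or inner vertices into inner or outer ones, never the reverse. (ii) A dual adjustment by $\delta$ decreases $y$ by $\delta$ on outer vertices and increases it by $\delta$ on inner vertices; the recorded free-vertex rule $y(f)=y'(f)-\delta$ is exactly the outer instance of this. Since $u$ and $v$ are inner at the time defining $T'$, property (i) says neither was ever outer at any earlier moment, so at every dual adjustment each endpoint of $uv$ was inner or unlabeled. By (ii) the contribution $y(u)+y(v)$ is then nondecreasing at each adjustment, and between adjustments the duals do not change. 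Hence the slack of $uv$ is nondecreasing throughout the algorithm.

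Finally I would use that $uv$ is unmatched to fix its starting value: at the start of the search only matched edges are tight, so $uv$ begins with strictly positive slack. Monotonicity then forces its slack to remain positive up to the moment defining $T'$, contradicting $uv\in T'$.

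The step I expect to be the main obstacle is making ``the slack never decreases'' airtight in the presence of blossoms, since an increasing $z(B)$ lowers slack. Here I would observe that any blossom $B$ with $uv\in\gamma(B)$ whose $z$ increases under a dual adjustment is an outer blossom, which would make its vertices outer and contradict our assumption that $u,v$ are inner; so the only moving $z$-terms belong to inner blossoms, whose $z$ decreases and hence only raises the slack. The second point to state carefully is the initialization fact that unmatched edges are not tight at the start of the search; once that is granted, the monotonicity argument closes the proof immediately.
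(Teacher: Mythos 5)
Your proof is correct, and it rests on the same mechanism as the paper's: the direction of dual adjustment at inner vertices, plus the observation that an inner vertex's mate is outer (so $uv\notin M$). The difference is packaging. The paper runs the argument quantitatively rather than as a slack invariant: a grow step creates an inner vertex $u$ with $y(u)=1$, every subsequent dual adjustment increases $y(u)$, hence $y(u)\ge 1$ and $y(v)\ge 1$, so $y(u)+y(v)\ge 2>0=w(uv)$ for the unmatched edge $uv$ and the edge cannot be tight. Pinning the value at each endpoint this way makes the proof two lines and dispenses with both of the side conditions you correctly identified as the delicate points: it needs no initialization fact about unmatched edges and no bookkeeping of $z$-terms. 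Your slack-monotonicity version buys a statement that does not depend on the exact value $1$, at the cost of those two extra verifications. One blemish: your fallback clause about inner blossoms has the sign backwards --- in the slack $y(u)+y(v)+\sum\{z(B): uv\in\gamma(B)\}-w(uv)$, a decreasing $z(B)$ \emph{lowers} the slack, it does not raise it. The slip is harmless twice over. First, it is vacuous here: each search in this algorithm starts with no blossoms and blossom steps create only outer blossoms, so no blossom can contain the edge $uv$ while $u$ and $v$ are inner, and your primary clause (an outer blossom containing $uv$ would make $u$ outer, a contradiction) already disposes of all $z$-terms. Second, even in a general weighted setting where $uv$ could sit inside an inner blossom, both endpoints would then be inner and gain $+\delta$ each while $z(B)$ drops by $2\delta$, a net change of $0$, so your monotonicity conclusion survives; but as written the isolated claim is false and should be repaired or deleted.
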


\begin{proof}
A grow step that makes $u$ inner
has $y(u)=1$. Every subsequent
dual adjustment increases $y(u)$. So the search ends with
$y(u)\ge 1$. If $u$ and $v$ are inner then
$uv\notin M$, and $y(u)+y(v)\ge 2> w(uv)=0$.
\end{proof}

We now present a more detailed 
proof of the lemma.
Consider the search graph \os.  immediately before the 
last dual adjustment. \os. is a subgraph of $H$.
Define a path form similar to the lemma, as\\
\mycenter{$P,Q,P'$}
\newline
\noindent
where 

$P$ is an even length alternating path from a free vertex
to an outer vertex of \os., $P\con T'$;

$Q$ has the form of the lemma; 

$P'$ is an odd alternating path whose 
last edge is matched and last vertex is
inner in \os., $P'\con T'$.

\bigskip

Let $A$ be an alternating even-length path  in $H$
that starts at a free
vertex.
We claim that $A$ is a prefix of the above form.
Clearly the claim forces {\em find\_ap} to find
a path with the structure of the lemma.

We prove the claim inductively. Suppose an even length prefix $A'$ of
$A$
ends at vertex $u$, and the next two edges
of $A$ are $uv,vv'$
 with $uv\notin M \ni vv'$.

If $A'$ has length 0 then $u$ is free.
$A'$ has the form $P$. 

Suppose 
$A'$ has form $P$. 
There are three possibilities:

\subcase {$v$ is inner in \os.}
Its mate $v'$ is outer, so
form $P$ holds for the longer prefix.

\subcase {$v$ is outer in \os.} $uv$ joins two outer vertices
of \os.. Thus $uv\in D_2$.
The matched edge $vv'$ joins an outer vertex with an inner, so
$v'$ is inner. So the new prefix of $A$ has form $P,Q,P'$
for $P'=(vv')$.

\subcase {$v\notin \os.$} This makes $uv\in D_1$.
The new prefix has the form $P,g_1,e$ with $g_1,e$ as in $Q$.

\bigskip

Now suppose $A'$ has form $P,g_1,e$ with $g_1,e$ as in $Q$. 
Since $e\notin \os.$ and
$uv$ is tight, $uv\in D_2$. So $v$ is outer.
Thus $v'$ is inner. The new prefix has form $P,Q,P'$
($P'=(vv')$).

Finally suppose $A'$ has form $P,Q,P'$.
No end of an edge of $D_1\cup D_2$ is inner.
Since $u$ is inner this makes  $uv\in T'$.
Also $u$ inner makes
$v$ outer (Proposition \ref{InnerInnerProp}).
The new prefix ends with
edge $vv'\in M$ and $v'$ inner. Thus it has form $P,Q,P'$.
The induction is complete.

\bigskip

The lemma opens up the possibility of 
having a Phase 2 search start from an edge $e$ of type $Q$.
DDFS uses this strategy.

The Micali-Vazirani algorithm
uses DDFS in Phase 1 as well. This depends on the fact that
blossoms have a starting edge $e$ similar to the lemma.
(More precisely suppose a blossom step
creates a blossom $B$ with base $b$, with $v\in B$
a new outer vertex. Then $P(v,b)$ contains a unique subpath
of form $Q$ of the lemma.
This is easily proved as above,
e.g., use the second argument,
traversing the path $P(v,b)$ starting from $b$.)

Using DDFS in both Phases 1 and 2 makes the Micali-Vazirani algorithm
elegant and  avoids any
overhead in transitioning to Phase 2.

The proof of \cite{V2} that DDFS is correct is involved. Possibly
it could be simplified using the lemmas we have presented, as well as other
structural properties that weighted matching makes clear.
The following aspects of the finer structure of $H$ 
are not needed for our development but are used in \cite{V2}.

\cite{V2} defines  {\em evenlevel}$(x)$ as the length of a shortest
even alternating
path from a free vertex to $x$.
The proof of Lemma \ref{EdmondsAPLemma} shows 
any even alternating $fx$-path has length
$\ge y(x)-y(f)$. Furthermore it shows that
an outer vertex $x$ has  $evenlength(x)=y(x)-y(f)=|P(x)|$.

{\em oddlevel}$(x)$, the length of a shortest
odd alternating
path from a free vertex to $x$, has a similar characterization, e.g.,
any odd alternating $fx$-path has length
$\ge 1-y(x)-y(f)+\sum z(B)$, where the sum extends over blossoms
$B$ with base vertex $b$ and $x\in B-b$.

Finally \cite{V2} divides the edges of $H$ into {\em bridges} and {\em props}.
This is due to the fact that an edge $e$ of form $Q$ can 
trigger an initial blossom step, which can be followed
by blossom steps triggered by unmatched edges of $T'$.
$e$ is a bridge and the other triggers are props.
(In the precise blossom structure stated above, $e$ is the $Q$ edge and the
prop triggers are in $T'$.)

\begin{thebibliography}{99}

\def\referencesheading{0}
\ifcase\referencesheading
\or
\bigskip\penalty-2000%
\noindent{\bf References \hfill\bigskip}
\parindent=0pt
\or
\line{\quad}\penalty-2000%
\noindent{\twelvebf References \hfill}
\medskip
\parindent=0pt
\fi
%
\font\ninerm=cmr9
\font\ninebf=cmbx9
\font\nineit=cmti9        
\font\ninei=cmmi9
\font\ninesy=cmsy9
\def\ninepoint{%
   \def\rm{\ninerm}\def\bf{\ninebf}%
   \def\it{\nineit}\def\smc{\ninerm}\baselineskip=11pt\rm%
        \textfont0=\ninerm \scriptfont0=\sevenrm
	\textfont1=\ninei \scriptfont1=\seveni
	\textfont2=\ninesy \scriptfont2=\sevensy
	\textfont3=\tenex \scriptfont3=\tenex
}
\def\rsize{39pt}
\def\r #1]{\medskip%
\hangafter=-10\hangindent=\rsize%
\hskip0pt  \llap{\hbox to \rsize{#1]\hfill}}\ignorespaces}
%
%
\def\al #1,{{\it Algorithmica, #1,}}
\def\comb #1,{{\it Combinatorica, #1,}}
\def\cu,{Comp.\ Sci.\ Dept., Univ.\ Colorado, Boulder, CO,}
\def\focs #1,{{\it Proc.\ #1 Annual Symp.\ on Found.\ of Comp.\ Sci.,}}
\def\ipl #1,{{\it Inf.\ Proc.\ Letters, #1,}}
\def\ja #1,{{\it J.\ Algorithms,  #1,}}
\def\jacm #1,{{\it J.\ ACM,  #1,}}
\def\jcss #1,{{\it J.\ Comp.\ and System Sci., #1,}}
\def\mprog #1,{{\it Math.\ Programming,  #1,}}
\def\mprogb #1,{{\it Math.\ Programming B,  #1,}}
\def\net #1,{{\it Networks, #1,}}
\def\phd{Ph.\ D.\ Dissertation}
\def\sicomp #1,{{\it SIAM J.\ Comput.,  #1,}}
\def\siad #1,{{\it SIAM J.\ Alg.\ Disc.\ Meth., #1,}}
\def\sidm #1,{{\it SIAM J.\ Disc.\ Math., #1,}}
\def\soda #1,{{\it Proc.\ #1 Annual ACM-SIAM Symp.\ on Disc.\ Algorithms,}} 
\def\stoc #1,{{\it Proc.\ #1 Annual ACM Symp.\ on Theory of Comp.,}}
\def\tr {Tech. Rept.\ }
%
\def\pp#1-#2.{pp.\ #1--#2.}
\def\spp#1-#2;{pp.\ #1--#2;}

%
\def\nrlq #1,{{\it Naval Res.\ Logist.\ Quart., #1,}}
%

\bibitem{CCPS}
W.J. Cook, W.H. Cunningham, W.R. Pulleyblank, and A.~Schrijver,
{\it Combinatorial Optimization},
Wiley and Sons, NY, 1998.

\bibitem{E}
J. Edmonds, ``Maximum matching and a polyhedron with 0,1-vertices'', 
{\it  J.\ Res.\ Nat.\ Bur.\ Standards 69B}, 1965, \pp 125-130.

\bibitem{GT85} 
H.N. Gabow and R.E. Tarjan,
``A linear-time algorithm for a special case of disjoint set union'', 
\jcss 30, 2, 1985, \pp 209-221.

\bibitem{GT89} 
H.N. Gabow and R.E. Tarjan,
``Faster scaling algorithms for general graph matching problems'',
{\it J.\ ACM} 38, 4, 1991, \pp 815-853.

\bibitem{GK} 
A.V. Goldberg and A.V. Karzanov,
``Maximum skew-symmetric flows and matchings'',
{\it Math.\ Program., Series A} 100, 2004, \pp 537-568.


\bibitem{HK}
J.E. Hopcroft and R.M. Karp,
``An $n^{2.5}$ algorithm for maximum matchings in bipartite
graphs'', \sicomp 2, 1973, \pp 225-231.

\bibitem{K}
A.V. Karzanov, 
``On finding maximum flows in network with special structure and some
applications'', in Russian, {\em Math.\ Problems for Production Control} 5,
Moscow State University Press, 1973, \pp 81-94.

\bibitem{L}
E.L. Lawler, 
{\it Combinatorial Optimization: Networks and Matroids},
Holt, Rinehart and Winston, New York, 1976.

\bibitem{MV}
S. Micali and V.V. Vazirani, "An $O(\sqrt {|v|}\cdot |E|)$ algorithm for finding
maximum matching in general graphs",
\focs 21st, 1980, \pp 17-27.

\bibitem{PS}
C.H. Papadimitriou and K. Steiglitz, 
{\it Combinatorial Optimization:  Algorithms and Complexity}, 
Prentice-Hall, Inc., Englewood Cliffs, New Jersey, 1982.


\bibitem{S}
A.~Schrijver,
{\it Combinatorial Optimization: Polyhedra and Efficiency},
Springer, NY, 2003.

\bibitem{V1}
V.V.~Vazirani,
``A theory of alternating paths and blossoms for proving 
correctness of the $O(\sqrt V E)$ general graph maximum matching
algorithm'',
{\em Combinatorica}, 14, 1, 1994, \pp 71-109.

\bibitem{V2}
V.V.~Vazirani,
``A simplification of the MV matching algorithm and its proof'',
 arXiv:1210.4594v5, Aug.27, 2013, 32 pages;
also
``A proof of the MV matching algorithm'', manuscript, May 13, 2014, 42 pages.

\end{thebibliography}
\end{document}

\iffalse
One end is outer immediately before the last dual adjustment.
It becomes tight in the last dual adjustment. 

One end is outer immediately before the last dual adjustment.
So it is in T'.
It becomes tight before the last dual adjustment. So it is in T'.

Also $e$ is not on the penultimate edge on any
last or path of lenght P(x) minlevel(x) path for x outer.
An unmatched prop is an edge that could be chosen in a grow step.

Since $e$ is always dominated, 
at least one of $u,v$ was outer before the last dual adjustment.
If $u$ was outer and $v$ was inner then
$uv$ is not a bridge. (uv 
$y(e)=0$ before the last dual adjustment,
so $tenacity(uv)=1-2y(f)\le \ell(P)$.
If only one of $u,v$ was outer 
and the other was inner
before the last dual adjustment,
uv became tight $e\in T'$.
then 
Consider a maximum weight augmenting path $P$ and an arbitrary unmatched edge $e$.
Using \eqref{OldYEqn}
and \eqref{LWEqn},
\begin{equation}
\label{DualToTenacityEqn}
y'(e)=2\delta \xiff w(P)=2y'(f)-y'(e) \xiff \ell(P)=
y'(e)-2y'(f)+1.
\end{equation}
\noindent
Now suppose $e\in D_2$.

If $x$ is an
end of $e$, $x$ is outer before the last dual adjustment,
so $\ell(P(x))=y'(x)-y'(f)$.
Thus the right-hand side of \eqref{DualToTenacityEqn} is
${tenacity}(e)$.
We conclude\\
\mycenter{$y'(e)=2\delta \xiff \ell(P)={tenacity}(e).$}

Suppose 
the maximum weight
augmenting path $P$ has form (a).
Its edge
$e$ has
$y'(e)=2\delta$. 
